\documentclass[submission,copyright,creativecommons]{eptcs}

\usepackage{iftex}

\ifpdf
  \usepackage{underscore}         % Only needed if you use pdflatex.
  \usepackage[T1]{fontenc}        % Recommended with pdflatex
  \usepackage[utf8]{inputenc}
\else
  \usepackage{breakurl}           % Not needed if you use pdflatex only.
\fi

\usepackage{graphicx}
\usepackage{stfloats}
\usepackage{placeins}
\usepackage{float}
\usepackage{amsthm}
\usepackage{amsmath}
\usepackage{amssymb}
\usepackage{booktabs}
\usepackage{cite}

\usepackage[english]{babel}

\theoremstyle{definition}
\newtheorem{theorem}{Theorem}
\newtheorem{lemma}{Lemma}
\newtheorem{definition}{Definition}
\newtheorem{corollary}{Corollary}

% Used for displaying a sample figure. If possible, figure files should
% be included in EPS format.
%

\usepackage[utf8]{inputenc}
\usepackage[linesnumbered,ruled]{algorithm2e}

\usepackage{tikzit}
\input{quantum.tikzdefs}
% TiKZ style file generated by TikZiT. You may edit this file manually,
% but some things (e.g. comments) may be overwritten. To be readable in
% TikZiT, the only non-comment lines must be of the form:
% \tikzstyle{NAME}=[PROPERTY LIST]

% Node styles
\tikzstyle{gate}=[shape=rectangle, text height=1.5ex, text depth=0.25ex, yshift=0.5mm, fill=white, draw=black, minimum height=3mm, yshift=-0.5mm, minimum width=3mm, font={\small}, tikzit category=circuit, inner sep=2pt]
\tikzstyle{big gate}=[shape=rectangle, text height=1.5ex, text depth=0.25ex, yshift=0.5mm, fill=white, draw=black, minimum height=10mm, yshift=-0.5mm, minimum width=5mm, font={\small}, tikzit category=circuit]
\tikzstyle{Z dot}=[inner sep=0mm, minimum size=2mm, shape=circle, draw=black, fill={rgb,255: red,221; green,255; blue,221}, tikzit category=zx]
\tikzstyle{Z phase dot}=[minimum size=5mm, font={\footnotesize\boldmath}, shape=rectangle, rounded corners=2mm, inner sep=0.2mm, outer sep=-2mm, scale=0.8, tikzit shape=circle, draw=black, fill={rgb,255: red,221; green,255; blue,221}, tikzit draw=blue, tikzit category=zx]
\tikzstyle{X dot}=[Z dot, shape=circle, draw=black, fill={rgb,255: red,255; green,136; blue,136}, tikzit category=zx]
\tikzstyle{X phase dot}=[Z phase dot, tikzit shape=circle, tikzit draw=blue, fill={rgb,255: red,255; green,136; blue,136}, font={\footnotesize\boldmath}, tikzit category=zx]
\tikzstyle{hadamard}=[fill=yellow, draw=black, shape=rectangle, inner sep=0.6mm, minimum height=1.5mm, minimum width=1.5mm, tikzit category=zx]
\tikzstyle{paulibox}=[fill={rgb,255: red,221; green,221; blue,255}, draw=black, shape=rectangle, inner sep=0.6mm, minimum height=5mm, minimum width=5mm, font={\footnotesize}, text height=1.5ex, text depth=0.25ex, tikzit category=zx]
\tikzstyle{vertex}=[inner sep=0mm, minimum size=1mm, shape=circle, draw=black, fill=black, tikzit category=misc]
\tikzstyle{vertex set}=[inner sep=0mm, minimum size=1mm, shape=circle, draw=black, fill=white, font={\footnotesize\boldmath}, tikzit category=misc]
\tikzstyle{small black dot}=[fill=black, draw=black, shape=circle, inner sep=0pt, minimum width=1.2mm, tikzit category=circuit]
\tikzstyle{cnot ctrl}=[fill=black, draw=black, shape=circle, inner sep=0pt, minimum width=1.2mm, tikzit category=circuit]
\tikzstyle{cnot targ}=[fill=white, draw=white, shape=circle, tikzit category=circuit, label={center:$\oplus$}, inner sep=0pt, minimum width=2.1mm, tikzit fill={rgb,255: red,102; green,204; blue,255}, tikzit draw=black]
\tikzstyle{ket}=[fill=white, draw=black, shape=regular polygon, regular polygon sides=3, regular polygon rotate=-30, scale=0.7, inner sep=1pt, tikzit category=circuit, tikzit shape=rectangle, tikzit fill=green]
\tikzstyle{bra}=[fill=white, draw=black, shape=regular polygon, regular polygon sides=3, regular polygon rotate=30, scale=0.7, inner sep=1pt, tikzit category=circuit, tikzit shape=rectangle, tikzit fill=red]
\tikzstyle{scalar}=[shape=rectangle, text height=1.5ex, text depth=0.25ex, yshift=0.5mm, fill=white, draw=black, minimum height=5mm, yshift=-0.5mm, minimum width=5mm, font={\small}]
\tikzstyle{clabel}=[fill=white, draw=none, shape=rectangle, tikzit fill={rgb,255: red,56; green,255; blue,242}, font={\footnotesize}, inner sep=1pt, tikzit category=labels]
\tikzstyle{empty diagram}=[draw={gray!40!white}, dashed, shape=rectangle, minimum width=1cm, minimum height=1cm, tikzit category=misc]
\tikzstyle{amap}=[fill=white, draw=black, shape=NEbox, tikzit category=asymmetric, tikzit fill=yellow, tikzit shape=rectangle]
\tikzstyle{amap conj}=[fill=white, draw=black, shape=NWbox, tikzit category=asymmetric, tikzit fill=green, tikzit shape=rectangle]
\tikzstyle{amap adj}=[fill=white, draw=black, shape=SEbox, tikzit category=asymmetric, tikzit fill=red, tikzit shape=rectangle]
\tikzstyle{amap trans}=[fill=white, draw=black, shape=SWbox, tikzit category=asymmetric, tikzit fill=orange, tikzit shape=rectangle]
\tikzstyle{astate}=[fill=white, draw=black, shape=NEtriangle, tikzit category=asymmetric, tikzit shape=circle, tikzit fill=yellow]
\tikzstyle{astate conj}=[fill=white, draw=black, shape=NWtriangle, tikzit category=asymmetric, tikzit shape=circle, tikzit fill=green]
\tikzstyle{astate adj}=[fill=white, draw=black, shape=SEtriangle, tikzit category=asymmetric, tikzit shape=circle, tikzit fill=red]
\tikzstyle{astate trans}=[fill=white, draw=black, shape=SWtriangle, tikzit category=asymmetric, tikzit shape=circle, tikzit fill=orange]
\tikzstyle{white dot}=[inner sep=0mm, minimum size=2mm, shape=circle, draw=black, fill={rgb,255: red,250; green,250; blue,250}]
\tikzstyle{white phase dot}=[minimum size=5mm, font={\footnotesize\boldmath}, shape=rectangle, rounded corners=2mm, inner sep=0.2mm, outer sep=-2mm, scale=0.8, tikzit shape=circle, draw=black, fill={rgb,255: red,250; green,250; blue,250}, tikzit draw=blue]
\tikzstyle{hbox}=[shape=rectangle, text height=2mm, fill={rgb,255: red,255; green,235; blue,61}, draw=black, minimum height=2mm, minimum width=2mm, font={\small}, tikzit category=zh, inner sep=0pt, rounded corners=0.5mm]
\tikzstyle{Z dot (zh)}=[inner sep=0mm, minimum size=2mm, shape=circle, draw=black, fill={rgb,255: red,250; green,250; blue,250}, tikzit category=zh]
\tikzstyle{X dot (zh)}=[Z dot, shape=circle, draw=black, fill={rgb,255: red,193; green,193; blue,193}, tikzit category=zh]
\tikzstyle{triangle}=[fill={rgb,255: red,255; green,136; blue,136}, draw=black, shape=isosceles triangle, isosceles triangle apex angle=60, minimum size=2.5mm, inner sep=0mm]
\tikzstyle{labelled hbox}=[shape=rectangle, text height=1.75ex, text depth=0.5ex, fill={rgb,255: red,255; green,235; blue,61}, draw=black, minimum height=3mm, minimum width=4mm, font={\small}, tikzit category=zh, inner sep=1.3pt, rounded corners=0.5mm]
\tikzstyle{Z phase dot (zh)}=[Z phase dot, tikzit shape=circle, tikzit draw=blue, fill={rgb,255: red,250; green,250; blue,250}, font={\footnotesize\boldmath}, tikzit category=zh]
\tikzstyle{X phase dot (zh)}=[Z phase dot, tikzit shape=circle, tikzit draw=blue, fill={rgb,255: red,193; green,193; blue,193}, font={\footnotesize\boldmath}, tikzit category=zh]
\tikzstyle{W node}=[fill=black, draw=black, shape=regular polygon, regular polygon sides=3, minimum size=2mm]
\tikzstyle{Z dot (zw)}=[fill=white, draw=black, shape=circle, minimum width=1.2mm, inner sep=0pt]
\tikzstyle{Z phase dot XL}=[Z phase dot, fill={rgb,255: red,250; green,250; blue,250}, draw=black, shape=circle, tikzit draw={rgb,255: red,191; green,0; blue,64}, tikzit shape=circle, font={\large\boldmath}, inner sep=0.0mm]

% Edge styles
\tikzstyle{hadamard edge}=[-, dashed, dash pattern=on 2pt off 0.5pt, thick, draw={rgb,255: red,68; green,136; blue,255}]
\tikzstyle{box edge}=[-, dashed, dash pattern=on 2pt off 0.5pt, thick, draw={rgb,255: red,203; green,192; blue,225}]
\tikzstyle{brace edge}=[-, tikzit draw=blue, decorate, decoration={brace,amplitude=1mm,raise=-1mm}]
\tikzstyle{diredge}=[->, thick]
\tikzstyle{double edge}=[-, double, shorten <=-1mm, shorten >=-1mm, double distance=2pt]
\tikzstyle{gray edge}=[-, {gray!60!white}]
\tikzstyle{pointer edge}=[->, very thick, gray]
\tikzstyle{boldedge}=[-, line width=1.6pt, shorten <=-0.17mm, shorten >=-0.17mm]
\tikzstyle{bidir edge}=[<->, very thick, draw={rgb,255: red,191; green,191; blue,191}]
\tikzstyle{purple edge}=[->, thick, draw={rgb,255: red,225; green,117; blue,216}]
\tikzstyle{green edge}=[->, thick, draw={rgb,255: red,167; green,231; blue,137}]
\tikzstyle{orange edge}=[->, thick, draw={rgb,255: red,245; green,170; blue,63}]
\tikzstyle{blue edge}=[->, thick, draw={rgb,255: red,68; green,136; blue,255}]
\tikzstyle{any edge}=[->, thick, draw=cyan]
\tikzstyle{red edge}=[->, thick, draw={rgb,255: red,255; green,136; blue,136}]
\tikzstyle{bidiredge}=[<->, thick]
\tikzstyle{dashed diredge}=[->, dashed, dash pattern=on 1pt off 0.5pt]
\tikzstyle{bidashed diredge}=[<->, dashed, dash pattern=on 1pt off 0.5pt]

\title{Picturing Counting Reductions with the ZH-Calculus}

\author{Tuomas Laakkonen$^{1, a}$, Konstantinos Meichanetzidis$^{1, b}$, John van de Wetering $^{2, c}$
\institute{$^{1}$ Quantinuum, 17 Beaumont Street, Oxford OX1 2NA, United Kingdom\\
$^{2}$ Informatics Institute, University of Amsterdam, 1098 XH Amsterdam, The Netherlands}
\email{$^{a, b}$ \{tuomas.laakkonen, k.mei\}@quantinuum.com, $^{c}$ john@vdwetering.name}
}

% \author{Tuomas Laakkonen$^{1,a}$, Konstantinos Meichanetzidis$^{1,b}$, John van de Wetering $^{2,c}$
% \institute{$^{1}$ Quantinuum, 17 Beaumont Street, Oxford OX1 2NA, United Kingdom\\
% $^{1,a}$Informatics Institute, University of Amsterdam, 1098 XH Amsterdam, The Netherlands}
% \email{$^{a}$ tuomas.laakkonen@quantinuum.com,
% $^{b}$ k.mei@quantinuum.com,
% $^{c}$ john@vdwetering.name}
% }

% \author{Tuomas Laakkonen
% \institute{Quantinuum, 17 Beaumont Street, Oxford OX1 2NA, United Kingdom}
% \email{tuomas.laakkonen@quantinuum.com}
% \and
% Konstantinos Meichanetzidis
% \institute{Quantinuum, 17 Beaumont Street, Oxford OX1 2NA, United Kingdom}
% \email{k.mei@quantinuum.com}
% \and
% John van de Wetering
% \institute{Informatics Institute, University of Amsterdam, 1098 XH Amsterdam, The Netherlands}
% \email{john@vdwetering.name}
% }

\begin{document}
    \maketitle

\begin{abstract}
        Counting the solutions to Boolean formulae defines the problem \sSAT, which is complete for the complexity class \sP. We use the ZH-calculus, a universal and complete graphical language for linear maps which naturally encodes counting problems in terms of diagrams,
    to give graphical reductions from \sSAT to several related counting problems. Some of these graphical reductions, like to \stwoSAT, are substantially simpler than known reductions via the matrix permanent. Additionally, our approach allows us to consider the case of counting solutions modulo an integer on equal footing.
    Finally, since the ZH-calculus was originally introduced to reason about quantum computing, we show that the problem of evaluating scalar ZH-diagrams in the fragment corresponding to the Clifford+T gate set, is in $\textbf{FP}^\sP$. Our results show that graphical calculi represent an intuitive and useful framework for reasoning about counting problems.
\end{abstract}

Graphical calculi like the ZX-calculus~\cite{CD1,coecke_interacting_2011} are seeing increased usage in reasoning about quantum computations.
While earlier work in this area has mostly focused on \emph{representing} existing quantum protocols and quantum algorithms in a graphical way in order to shed light on how these protocols work~\cite{duncan2010rewriting,CES,hillebrand_superdense_2012,horsman2011quantum,duncan2013verifying,coecke_picturing_2017,gogioso2017fully,Vicary2013,shaikh2022sum}, recent years have seen the development of entirely new results that improve upon the existing state-of-the-art.
For instance, there are now new results proved with a graphical calculus in quantum circuit optimization~\cite{cliffsimp,kissinger2019tcount,deBeaudrapN2020treducspidernest,Cowtan2020phasegadget,cowtan2020generic,borgna2021hybrid}, verification~\cite{kissinger2019tcount,kostia2021geometry,lehmann2022vyzx}
and simulation~\cite{kissinger2021simulating,kissinger2022classical,laakkonen2022graphical,codsi2022classically,ufrecht2023cutting},
as well as new protocols in measurement-based quantum computing~\cite{kissinger2017MBQC,Backens2020extraction,cao2022multiagent},
surface codes~\cite{horsman2017surgery,magicFactories,autoCCZ,hanks2019effective,gidney2022pair}
and other fault-tolerant architectures~\cite{litinski2022active,shaw2022quantum}.

These results in quantum computing show that diagrammatic reasoning can lead to new insights and algorithms that go beyond what is known or what even can be derived using other methods.
However, these graphical languages are in actuality not restricted to just studying quantum computing.
In fact, diagrams, the objects of a graphical calculus, can represent arbitrary tensor networks, which can represent arbitrary
$2^n$-dimensional tensors
%, most often complex-valued,
and so they can be used for a wide variety of problems.
Whereas one would in general perform tensor contractions in order to compute with tensor networks,
a graphical calculus equips its diagrams with a formal rewrite system, which respects their tensor semantics, and allows for reasoning in terms of two-dimensional algebra.

% The applications of tensor network methods range from condensed matter physics, high energy physics, decoding quantum error correcting codes, simulating quantum circuits, and machine learning.

%counting complexity and why it's important: it appears everywhere in theory (reductions, separations, etc) and in practice (algos, and transfer of algos between domains)

In this work, we focus on \emph{counting problems}
% , or model counting,
which are of both practical and theoretical importance for a variety of domains, from computing partition functions in statistical mechanics~\cite{NatureComputation}, to probabilistic reasoning~\cite{ROTH1996273} and planning~\cite{BYLANDER1994165}.
The computational complexity of counting problems is of fundamental interest to computer science~\cite{papadimitriou1994computational}.
Counting problems also have a natural tensor network representation~\cite{LatorreTensorSAT},
and the complexity of computing with tensor networks has been thoroughly studied~\cite{damm_complexity_2002}.
In practice, tensor contraction algorithms for counting problems have been developed, showing competitive performance against the state of the art~\cite{kourtis_fast_2019,gray_hyper-optimized_2021}.

% ZH calculus can represent counting problems, cite K's paper.
% Cite Tuomas' prev result: generalising an algo via ZH and using a reduction to get a new bound.
% This work is motivated by such advances and aims to establish ZH as a unifying language for reasoning about counting complexity.

Graphical languages like the ZX-calculus, and its close relative, the ZH-calculus,
have been used to rederive complexity-theoretic results.
Townsend-Teague \emph{et al.} \cite{townsend-teague2021classifying} showed that the partition function of a family of Potts models, related to knot theory and quantum computation, is efficiently computable.
de Beaudrap \emph{et al.}~\cite{de_beaudrap_tensor_2021} proved graphically
that the decision version of a hard counting problem can be solved in polynomial time.
These proofs are constructive, in that they introduce algorithms in terms of rewriting strategies.
Even though this line of work recasts known results in a graphical language, such an approach is arguably more unifying and intuitive, and thus has promising potential for generalization.
Recent work by Laakkonen \emph{et al.}~\cite{laakkonen2022graphical} actually derived a \emph{novel}
complexity-theoretic result in the form of an improved runtime upper bound for counting problems. 
To obtain this result, reductions to specific counting problems were given a fully graphical treatment, to which then a known algorithm could be applied, after this algorithm was also treated graphically and generalized.

In this work, we continue building on this programme of applying graphical methods to counting. Specifically, we use the ZH-calculus to rederive various counting reductions that appear in the literature, providing a unified, and arguably simpler, presentation.
Among others, we give reductions from \sSAT to \stwoSAT, \sSATvar{Planar} and \sSATvar{Monotone}. See Table~\ref{tbl: reductions} for an overview.
Our direct proof that \stwoSAT is \sP-complete also allows us to considerably simplify the proof that computing the permanent of an integer matrix is \sP-complete.
Our results show that graphical languages can form a useful tool for the study of counting complexity.

% introduce ZH: generators, how they compose, stick the rewrite rules in the appendix (some rev might complain about that by we'll fight them).
% who very quickly how ZH diagrams represent \sSAT instances.

In Section~\ref{sec:prelim} we introduce the basics of counting complexity, the ZH-calculus and how to represent \sSAT in ZH. Then in Section~\ref{sec:reductions} we present our main reductions from \sSAT by rewriting ZH-diagrams. Section~\ref{sec:evalzh} considers the converse problem of reducing ZH-diagram evaluation to \sSAT. We conclude in Section~\ref{sec:conclusion}, but note that we also present some additional reductions and proofs in the appendices.

\section{Preliminaries}\label{sec:prelim}

\subsection{Counting reductions}

    Counting complexity is defined in terms of the complexity classes \sP and $\sPk{M}$, which are the `counting analogues' of \NP. The class \sP, first defined by Valiant in 1979 \cite{valiant_complexity_1979-2}, is the class of problems which can be defined as counting the number of accepting paths to a non-deterministic Turing machine (NTM) which halts in polynomial time, whereas $\sPk{M}$ is the class of problems which can be defined as counting, modulo $M$, the number of accepting paths to an NTM (that also halts in polynomial time). Note that the notation $\pP$ is also used to indicate $\sPk{2}$. These complexity classes are clearly related to \NP, which consists of problems that can be defined as deciding whether an NTM has \emph{any} accepting path. 
    
    Famously, the Boolean satisfiability problem \textbf{SAT} is \NP-complete \cite{cook_complexity_1971}. Similarly, there are notions of \sP-completeness and $\sPk{M}$-completeness \cite{valiant_complexity_1979-2}. A problem $\mathcal{A}$ is \sP-hard ($\sPk{M}$-hard) if any problem in \sP ($\sPk{M}$) can be solved in polynomial time given an oracle for $\mathcal{A}$ (that is, there exists a Cook reduction from any problem in \sP to $\mathcal{A}$). A problem $\mathcal{A}$ is \sP-complete ($\sPk{M}$-complete) if it is both \sP-hard ($\sPk{M}$-hard) and is in \sP ($\sPk{M}$). 

    \begin{definition}
        Suppose $\phi : \mathbb{B}^n \to \mathbb{B}$ is a Boolean formula in Conjunctive Normal Form (CNF),
        \begin{equation}
            \phi(x_1, \dots, x_n) = \bigwedge_{i = 1}^{m} (c_{i1} \lor c_{i2} \lor \cdots \lor c_{ik_i})
        \end{equation}
        where $c_{ij} = x_l$ or $\lnot x_l$ for some $l$, and let $\scount{\phi} = |\{\vec{x} \mid \phi(\vec{x}) = 1\}|$. Each argument to $\phi$ is called a \emph{variable} and each term $c_{i1} \land \dots \land c_{ik_i}$ a \emph{clause}. Then, we define the following problems:
        \begin{enumerate}
            \item \textbf{SAT}: Decide whether $\scount{\phi} > 0$,
            \item \sSAT: Compute the value of $\scount{\phi}$,
            \item $\sSATk{M}$: Compute the value of $\scountk{M}{\phi} := \scount{\phi} \mod M$.  
        \end{enumerate}
    \end{definition}

    We additionally define variants, \textbf{kSAT}, \skSAT, $\skSATk{M}$, which represent the case where $\phi$ is restricted to contain only clauses of size at most $k$ (note that some sources take this to be size \emph{exactly} $k$, but we can recover this from our definition by adding dummy variables to each clause). We also take $\pSAT$ as alternate notation for $\sSATk{2}$.

    To each formula $\phi$ we associate two graphs: the \emph{incidence graph} is a bipartite graph with one vertex for each variable and one for each clause, and where a variable vertex is connected to a clause vertex if it occurs in that clause. The \emph{primal graph} has one vertex for each variable, which are connected together if the variables occur together in a clause.
    
    The Cook-Levin theorem \cite{cook_complexity_1971} shows that \kSAT is \NP-complete for $k \geq 3$, but in fact also shows that \skSAT is \sP-complete and $\skSATk{M}$ is $\sPk{M}$-complete for any $M$, as it maps any NTM into a Boolean formula such that the number of satisfying assignments is exactly equal to the number of accepting paths. We will consider variants on these problems, and specifically the case where the structure of the formula $\phi$ is restricted in some way. For each of these variants, we will append a prefix to $\textbf{SAT}$ to indicate the restriction:
    \begin{itemize}
        \item \textbf{PL:} The incidence graph of the formula is planar.
        \item \textbf{MON:} The formula is monotone - it contains either no negated variables or no unnegated variables.
        \item \textbf{BI:} The primal graph is bipartite - the variables can be partitioned into two sets such that each clause contains at most one variable from each set.
    \end{itemize}

\subsection{The ZH-calculus}

The ZH-calculus is a rigorous graphical language for reasoning about ZH-diagrams in terms of rewriting~\cite{backens_zh_2019}. We will give here a short introduction, referring the reader to~\cite[Section~8]{vandewetering2020zxcalculus} for a more in-depth explanation.

ZH-diagrams represent tensor networks \cite[Section 4.1]{orus_practical_2014} composed of the two generators of the language, the Z-spider and the H-box. The generators and their corresponding tensor interpretations are
\begin{equation}
    \tikzfig{qpl-zh-generators}
\end{equation}
where the H-box is labeled with a constant $a \in \mathbb{C}$, and we assume $a = -1$ if not given. The tensors corresponding to the generators are composed according to the tensor product and each wire connecting two tensors indicates a contraction, i.e. a summation over a common index~\cite{vandewetering2020zxcalculus}. We will also use two derived generators - the Z-spider with a phase, and the X-spider. These are given in terms of the other generators as:
\begin{equation}
    \tikzfig{qpl-zh-generators-derived}
\end{equation}
Note that the tensors are symmetric under permutation of their wires, or indices.
This implies that only the connectivity, or the topology, of the tensor network matters. In particular, we will not distinguish indices of generators as inputs and outputs as in \cite{backens_zh_2019}. Any ZH-diagram with $n$ open wires therefore represents a tensor with $n$ indices. In the special case of no open wires this represents a scalar, and we will call such diagrams scalar diagrams.

The rewriting rules of the ZH-calculus are shown in Appendix \ref{sec:zhrules}. The rules are \emph{sound}, i.e. they respect the tensor semantics, and also \emph{complete} for complex-valued linear maps, i.e. if two ZH-diagrams represent the same tensor, then there exists a sequence of rewrites which transforms one diagram to the other.

\subsection{\sSAT instances as ZH-diagrams}

    To embed \sSAT instances into ZH-diagrams, we use the translation of de Beaudrap \emph{et al.} \cite{de_beaudrap_tensor_2021} where each variable becomes a Z-spider, each clause a zero-labeled H-box, and X-spiders are used for negation. In particular the mapping is as follows
    \begin{equation}
        \tikzfig{qpl-zh-sat-defs}
    \end{equation}
    and to form \sSAT instances, we combine these as
    \begin{equation}
        \tikzfig{qpl-zh-sat-structure}
    \end{equation}
    where $G$ is a collection of wires and negations, connecting each variable to its corresponding clauses. Due to cancellation of adjacent X-spiders, an instance has an X-spider between a variable and a clause if the variable appears \emph{unnegated} in that clause, and a wire if it appears negated. For example, for the formula $\phi(x_1, x_2, x_3) = (x_1 \lor \lnot x_2 \lor \lnot x_3) \land (x_2 \lor x_3) \land (\lnot x_1 \lor \lnot x_2)$, we have:
    \begin{equation}
        \tikzfig{qpl-zh-sat-example}
    \end{equation}
    In this representation, a formula that is planar corresponds to a planar ZH-diagram and a monotone one corresponds to a ZH-diagram where there are no X-spiders or where there is an X-spider between every H-box and Z-spider. Instances with maximum clause size $k$ correspond to ZH-diagrams where every H-box has degree at most $k$.

\section{Reductions from \sSAT}\label{sec:reductions}

    We will show using the ZH-calculus that the restricted versions of \sSAT defined above---planar, monotone or bipartite--- are \sP- and/or $\sPk{M}$-complete - Table \ref{tbl: reductions} gives an overview of our reductions. All these results are already known in the literature, as will be discussed in each section, but our main contribution is to provide a simplifying and unifying viewpoint through the use of the ZH-calculus.

    \begin{table}
        \centering
        \begin{tabular}{l|l|cc|ccccc}
            \emph{Result} & \emph{Reduction} & $\sPk{k}$ & \NP & \textbf{PL-} & \stwoSAT & \textbf{MON-} & \textbf{BI-} & \textbf{3DEG-} \\
            \midrule
            Theorem \ref{sattoplsatthm} & $\sSAT \to \sSATvar{PL}$ & \checkmark & \checkmark & \checkmark & & & & \\
            Theorem \ref{satto2sat} & $\sSAT \to \stwoSAT$ & \checkmark & & \checkmark & \checkmark & & \checkmark & \\
            Theorem \ref{sattomonosattheorem} & $\sSAT \to \sSATvar{MON}$ & $\dagger$ & & \checkmark & \checkmark & \checkmark & & \\
            Theorem \ref{sattocubicsattheorem} & $\sSAT \to \sSATvar{3DEG}$ & \checkmark & & \checkmark & \checkmark & \checkmark & \checkmark & \checkmark \\
        \end{tabular}
        \caption{
            An overview of the main reductions presented in this paper. The two leftmost columns give each theorem and the corresponding reduction. The middle columns (marked $\sPk{k}$, and \NP) are given a checkmark if the corresponding reduction is valid for that complexity class as well as for \sP. A dagger is written for $\sPk{k}$ if there are some additional restrictions placed on $k$. The rightmost columns (marked \textbf{PL-}, etc) show what structure each reduction preserves - a checkmark is given if the corresponding reduction preserves the properties of the given \sSAT variant (here \stwoSAT indicates that the maximum clause size is two), i.e.~the reduction presented in Theorem~\ref{satto2sat} sends planar instances to planar instances, but does not send monotone instances to monotone instances. This applies for each complexity class that reduction is valid for (e.g Theorem \ref{satto2sat} also implies a reduction $\sSATvark{k}{PL} \to \stwoSATvark{k}{PL}$).
        }
        \label{tbl: reductions}
    \end{table}

    \subsection{$\sSAT \to \sSATvar{PL}$}  
    The first, and most commonly taught, proof that \SATvar{PL} is \NP-complete was published in 1982 by Lichtenstein \cite{lichtenstein_planar_1982}. This reduction is parsimonious - every satisfying assignment of the original formula corresponds to one satisfying assignment of the planar formula. Hence, this proves also that \sSATvar{PL} and $\sSATvark{M}{PL}$ are complete for \sP and $\sPk{M}$. Lichtenstein's construction uses a large gadget to eliminate non-planarity. In the following construction, we derive a similar gadget from first principles, by building on a famous identity from quantum computing.

    \begin{lemma}
        \label{sattoplsat}
        For any $\phi \in \skSAT$ with $n$ variables and $m$ clauses and $k \geq 3$, there is a planar $\phi' \in \skSAT$ such that $\scount{\phi} = \scount{\phi'}$. Furthermore, $\phi'$ has $O(n^2m^2)$ variables and clauses, and is computable in $O(\mathrm{poly}(n, m))$ time.
    \end{lemma}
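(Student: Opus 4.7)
The plan is to reproduce Lichtenstein's planarisation reduction graphically, by drawing the ZH-diagram of $\phi$ in the plane with crossings and replacing each crossing with a planar \emph{crossing gadget} derived from the well-known quantum identity $\text{CNOT}_{12}\cdot\text{CNOT}_{21}\cdot\text{CNOT}_{12} = \text{SWAP}$.

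First I would translate $\phi$ into a ZH-diagram $D$ using the encoding of Section~\ref{sec:prelim}, obtaining one Z-spider per variable and one $0$-labelled H-box per clause, with X-spiders on edges where variables appear unnegated. Then I would fix any drawing of $D$ in the plane. Since $D$ has $O(nk)=O(nm)$ edges, such a drawing can be taken to have at most $O(n^2m^2)$ pairwise edge crossings, each incident to exactly two edges of $D$.

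Next I would build the crossing gadget. The SWAP-as-three-CNOTs identity has a planar realisation in ZX/ZH: three Z-spider / X-spider pairs stacked horizontally, with the two horizontal wires passing straight through and the three vertical wires side by side. Since the rewrite rules of the ZH-calculus are sound, slotting this sub-diagram in at each crossing of $D$ preserves the represented tensor, and therefore the number of satisfying assignments. I would then massage the gadget into canonical SAT form---Z-spiders as variables, $0$-labelled H-boxes as clauses, X-spiders only as degree-$2$ negations on edges---either by direct ZH-rewriting or by expanding each internal higher-arity X-spider (which implements an XOR constraint) using the standard four-clause $3$-SAT encoding of XOR; this is where the hypothesis $k\ge 3$ enters.

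Translating the resulting planar ZH-diagram back to a Boolean formula yields the desired $\phi'$. Each gadget contributes only constantly many new Z-spiders and H-boxes of bounded degree, so $\phi'$ is a planar $k$-SAT formula with $O(n^2m^2)$ variables and clauses, computable in $O(\mathrm{poly}(n,m))$ time, and with $\scount{\phi'}=\scount{\phi}$ by soundness of the ZH rewrites. The main obstacle is identifying a concrete crossing gadget that is simultaneously planar \emph{and} translatable back into the SAT encoding of Section~\ref{sec:prelim} without exceeding clause width $k$; once such a gadget is in hand, bounding the number of crossings and verifying correctness through soundness is routine.
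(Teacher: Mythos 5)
Your high-level plan matches the paper's: translate $\phi$ into a ZH-diagram, draw it in the plane, replace each wire crossing with the planar gadget coming from the identity $\mathrm{SWAP}=\mathrm{CNOT}_{12}\,\mathrm{CNOT}_{21}\,\mathrm{CNOT}_{12}$, bound the number of crossings by $O(n^2m^2)$, and invoke soundness of ZH-rewriting to preserve $\scount{\cdot}$. You have also correctly pinned down where the real difficulty lies: the CNOT gadget leaves behind degree-$3$ X-spiders encoding XOR constraints, and these must be turned back into planar CNF of width at most~$k$.

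The gap is in the concrete fix you offer for that difficulty. You propose to expand each XOR using ``the standard four-clause 3-SAT encoding of XOR,'' i.e.\ the Tseytin transformation of XOR. This is precisely the step the paper flags as \emph{not} preserving planarity, and indeed it cannot: the four-clause encoding of $a\oplus b\oplus c=0$ has each of the three variables appearing in all four clauses, so its incidence graph is $K_{3,4}$, which contains $K_{3,3}$ and is therefore non-planar. Slotting that into your crossing gadget would reintroduce crossings and the construction would never terminate. The paper's actual solution is to first decompose the XOR as a small \emph{planar} circuit of NAND gates, and then apply the Tseytin transformation gate-by-gate to the NAND gates, whose individual CNF encodings \emph{do} have planar incidence graphs; this yields a constant-size planar gadget per crossing (specifically $12$ extra variables and $36$ extra clauses) with clause width at most~$3$. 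So your proposal is correct up to that single missing ingredient, and you identified the obstacle honestly, but the remedy you suggest is the one known not to work; substituting the NAND-based planar Tseytin step closes the gap.
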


    \begin{proof}
        Any instance $\phi \in \skSAT$ can be drawn in the plane as a ZH-diagram with some number of crossing wires. By using the famous identity that a SWAP gate can be written as the composition of 3 CNOTs, we have that \cite{nielsen_quantum_2010}:
        \begin{equation}
            \tikzfig{reductions/sat-to-planarsat-cnots}
        \end{equation}
        We now need to rewrite the X-spider, which represents a classical XOR function, into CNF for this to be a valid \sSAT instance. Unfortunately, the direct translation via the Tseytin transformation \cite{tseitin_complexity_1983} does not preserve planarity. However, we can instead use the following decomposition of an XOR as NAND gates, which is planar:
        \begin{equation}
            \tikzfig{reductions/sat-to-planarsat-xornand}
        \end{equation}
        Finally, NAND gates themselves have the following planar Tseytin transformation \cite{tseitin_complexity_1983} into CNF:
        \begin{equation}
            \tikzfig{reductions/sat-to-planarsat-nandcnf}
        \end{equation}
        Therefore, applying this to $\phi$ gives $\phi'$ with $n + 12c$ variables and $m + 36c$ clauses where $c$ is the number of crossings. If the $\phi$ is drawn with straight-line wires only, then since there are at most $nm$ wires in the diagram and each pair can cross at most once, we have $c \leq O(n^2m^2)$. As this rewrite introduces only clauses of size three or less, $\phi'$ is still a $\skSAT$ instance.
    \end{proof}

    \begin{theorem}
        \label{sattoplsatthm}
        We have the following:
        \begin{enumerate}
            \item $\skSATvar{PL}$ and $\sSATvar{PL}$ are $\sP$-complete for any $k \geq 3$.
            \item $\skSATvark{M}{PL}$ and $\sSATvark{M}{PL}$ are $\sPk{M}$-complete for any $M \geq 2$ and $k \geq 3$. 
            \item $\kSATvar{PL}$ and $\SATvar{PL}$ are $\NP$-complete for any $k \geq 3$.
        \end{enumerate}
    \end{theorem}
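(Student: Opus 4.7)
The strategy is to derive all three parts essentially as corollaries of Lemma~\ref{sattoplsat}, which provides what amounts to a parsimonious, polynomial-time reduction from $\skSAT$ to its planar restriction, preserving the satisfying count on the nose. Since $\skSAT$ and $\skSATk{M}$ are known to be $\sP$-complete and $\sPk{M}$-complete respectively for $k \geq 3$ (via the standard Cook–Levin argument, which itself is parsimonious), and $\kSAT$ is $\NP$-complete, it remains only to transport these hardness results along the reduction of the lemma and verify the correct containment direction.

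For part (1), I would first note that $\sSATvar{PL}$ (and a fortiori $\skSATvar{PL}$ for $k \geq 3$) is trivially in $\sP$, since any planar instance is a SAT instance. Then, to establish $\sP$-hardness, I take an arbitrary $\phi \in \skSAT$ and apply the lemma to obtain in polynomial time a planar $\phi' \in \skSAT$ with $\scount{\phi} = \scount{\phi'}$. This is a parsimonious many-one reduction, so $\skSATvar{PL}$ is $\sP$-hard. The corresponding statement for $\sSATvar{PL}$ follows because $\skSATvar{PL}$ reduces to $\sSATvar{PL}$ by forgetting the bound on clause size.

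For part (2), the same chain works because the lemma preserves the count exactly, and equal integers are equal modulo $M$ for every $M$. Hence the reduction descends to a parsimonious reduction $\skSATk{M} \to \skSATvark{M}{PL}$, and $\sPk{M}$-hardness follows from Cook–Levin for the modular counting version; membership in $\sPk{M}$ is again immediate from the inclusion into $\skSATk{M}$.

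For part (3), I would observe that any parsimonious reduction is in particular a Karp reduction for the corresponding decision problem: $\scount{\phi} > 0 \iff \scount{\phi'} > 0$, so $\kSAT \leq_p \kSATvar{PL}$. Membership in $\NP$ is trivial, giving $\NP$-completeness. The main (minor) subtlety I expect is just bookkeeping around the size constraints on clauses — the gadget in the lemma produces only clauses of size at most three, so when reducing from $\skSAT$ (with $k \geq 3$) the output remains a $\skSAT$ instance for the same $k$, and so the statement goes through uniformly in $k \geq 3$ without having to reprove Lemma~\ref{sattoplsat} for each value.
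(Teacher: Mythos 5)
Your proposal is correct and matches the paper's proof, which likewise derives all three parts directly from Lemma~\ref{sattoplsat}: the exact equality $\scount{\phi} = \scount{\phi'}$ gives the $\sP$ case, descends modulo $M$ for the $\sPk{M}$ case, and preserves satisfiability for the $\NP$ case, with the clause-size bookkeeping handled as you describe. The extra remarks on membership and Cook--Levin are implicit in the paper but do not change the argument.
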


    \begin{proof}
        \hfill
        \begin{enumerate}
            \item This follows immediately from Lemma \ref{sattoplsat} since the size of the rewriting does not depend on the clause size.
            \item This also follows from Lemma \ref{sattoplsat}, since $\scount{\phi} = \scount{\phi'}$ implies $\scountk{M}{\phi} = \scountk{M}{\phi'}$ for any $M$. 
            \item This follows immediately from Lemma \ref{sattoplsat} since if $\scount{\phi} = \scount{\phi'}$ implies that $\phi$ is satisfiable if and only if $\phi'$ is satisfiable.\qedhere
        \end{enumerate}
    \end{proof}

    \subsection{\sSAT to \stwoSAT}

    While it is known that \stwoSAT is \sP-complete \cite{valiant_complexity_1979-1}, the proof by Valiant relies on a chain of reductions from \sSAT to the permanent of an integer matrix, to the permanent of a binary matrix, to counting perfect matchings in graphs, to counting all matchings in graphs, and then finally to \stwoSATvar{MON-BI}. Moreover, this proof does not generalize to the case of $\sPk{M}$ - in fact, proof that $\ptwoSAT$ is $\pP$-complete was only shown 27 years later in 2006 using a completely different method of holographic reductions \cite{valiant_accidental_2006}, and then a reduction for any fixed $M$ was given in 2008 by Faben \cite{faben_complexity_2008}. In this section we give a simple direct reduction from \sSAT to \stwoSAT that applies both for \sP and $\sPk{M}$. 

    \begin{lemma}[\protect{\cite[Lemma 3.3]{laakkonen2022graphical}}]
        \label{hboxpi}
        The following equivalence holds:
        \begin{equation}
            \tikzfig{dpll-neg-kclause}
        \end{equation}
    \end{lemma}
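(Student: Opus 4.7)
Since the lemma is quoted from Lemma~3.3 of \cite{laakkonen2022graphical}, I would aim to reprove it quickly using the standard machinery of the ZH-calculus. There are two natural routes, and I would keep both in mind because they complement each other.

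The first route is semantic. The ZH-calculus is sound with respect to tensor semantics, so it suffices to verify that the left- and right-hand sides denote the same tensor. Each diagram has only finitely many open wires, so I can evaluate both sides pointwise on $\{0,1\}^k$ by unfolding generators: an X-spider with phase $\pi$ acts as a bit-flip on its wire, a Z-spider copies its input through all legs, and an H-box with label $a$ on $k$ wires outputs $a$ when all inputs are $1$ and $1$ otherwise. Plugging these local rules into each side and computing the resulting Boolean function gives two tensors, and checking equality on every input string closes the argument.

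The second route is purely diagrammatic. I would search for a short rewriting derivation using the rules in Appendix~\ref{sec:zhrules}. The likely ingredients are the $\zxpi$ rule to push the X-spider with phase $\pi$ through adjacent spiders, spider-fusion \zxspider to consolidate the Z-spiders that appear, $\zxhh$ to remove the pair of Hadamards that typically arises when commuting an X-spider past an H-box, and the bialgebra-style rule \zxsc to re-organise the resulting connectivity. This is the style used in \cite{laakkonen2022graphical}, and given the small scale of the diagrams in the statement I expect the derivation to be only a handful of rule applications long.

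The main obstacle is not the calculation itself but selecting the cleanest presentation: such short rewriting arguments usually admit several routes of roughly equal length, and a poorly chosen route can obscure the underlying combinatorial content of the identity. I would therefore carry out the semantic check first as a safety net, fix the correct normal form it predicts, and only then commit to whichever rewriting sequence is most transparent for subsequent use in the reductions of Section~\ref{sec:reductions}.
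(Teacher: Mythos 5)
The paper offers no proof of this lemma at all: it is stated as a citation to Lemma~3.3 of \cite{laakkonen2022graphical} and used as a black box. So there is no ``paper's own proof'' to compare against here, only the reference.

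Against that backdrop, your submission is not yet a proof of anything. You describe two plausible strategies --- pointwise tensor evaluation, and a diagrammatic rewriting derivation using \zxpi, \zxspider, \zxhh, and \zxsc --- but you do not carry either one out. A reviewer cannot check a plan. The semantic route (Route~1) is the one that can actually be closed here, since the identity in question is a small, fixed-arity equation between tensors of shape $2^{\otimes k}$ and is verifiable by direct computation, but you need to actually write down both sides, state the convention for what an $a$-labelled H-box, a $\pi$-phase X-spider, and a Z-spider each do on basis inputs, and then exhibit the two Boolean functions and observe they agree. Your description of the H-box interpretation is also imprecise: an H-box with label $a$ on $k$ legs evaluates to $a$ when \emph{all} legs are $1$, and to $1$ otherwise --- you state this correctly --- but you must then actually compose with the adjacent X-spider and track the resulting negation through the formula, which is the entire content of the lemma. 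As it stands the ``obstacle'' paragraph you include is about presentation choices, not about completing the argument, which signals that the substance has been deferred rather than done. Either commit to Route~1 and compute, or commit to Route~2 and exhibit the explicit rewrite chain; the meta-discussion on its own does not discharge the claim.
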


    \begin{lemma}
        \label{satto2sat2kp1}
        For any $M = 2^r + 1$ with $r \in \mathbb{N}$ and $\phi \in \sSATk{M}$ with $n$ variables and $m$ clauses, there is a $\phi' \in \stwoSATk{M}$ with $O(n + mr)$ variables such that $\scountk{M}{\phi} = \scountk{M}{\phi'}$, and $\phi'$ can be computed in $O(\mathrm{poly}(n, m, r))$ time.
    \end{lemma}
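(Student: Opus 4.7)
The plan is to give a direct counting-preserving reduction working clause-by-clause, using the hypothesis $2^r \equiv -1 \pmod M$ to cancel the extraneous satisfying extensions of an auxiliary gadget on non-satisfying assignments. Starting from the ZH-diagram of $\phi$ in the standard encoding, I would keep the original variable Z-spiders and X-spider negations untouched and replace each clause H-box by a small \stwoSAT gadget glued onto those same Z-spiders.

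For a clause $C_j = \ell_{j,1} \lor \cdots \lor \ell_{j,k_j}$, the gadget introduces one \emph{selector} variable $s_j$ and $r$ \emph{doubling} variables $y_{j,1},\dots,y_{j,r}$, with the $2$-clauses $(\lnot s_j \lor \lnot \ell_{j,i})$ for $i=1,\dots,k_j$ forcing every literal of $C_j$ to be false whenever $s_j = 1$, and the $2$-clauses $(s_j \lor y_{j,l})$ for $l=1,\dots,r$ pinning each $y_{j,l}$ to $1$ whenever $s_j = 0$. A direct enumeration shows that, conditional on an assignment $\vec{x}$ of the original variables, the number of extensions of $(s_j, \vec{y}_j)$ consistent with the gadget is $1$ when $C_j(\vec{x}) = 1$ and $1 + 2^r$ when $C_j(\vec{x}) = 0$; by the hypothesis $M = 2^r + 1$ the latter vanishes modulo $M$. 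Multiplying per-clause multiplicities then gives $\scountk{M}{\phi'} \equiv \scountk{M}{\phi}$, and the total variable count is $n + m(r+1) = O(n + mr)$, clearly computable in $\mathrm{poly}(n,m,r)$ time.

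Diagrammatically, this amounts to attaching to each original clause H-box a fresh Z-spider (for $s_j$) together with $r$ further Z-spiders (for the $y_{j,l}$), connected by $2$-arity zero-labeled H-boxes, with X-spiders handling negations as in the usual ZH encoding of SAT. Lemma~\ref{hboxpi} is invoked to uniformly absorb the X-spider wires on the new $2$-clauses into H-box labels, so that the resulting diagram is manifestly the ZH-encoding of a \stwoSAT instance.

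The main obstacle I expect is that the key identity $1 + 2^r \equiv 0 \pmod M$ is valid only modulo $M$, whereas the ZH-calculus itself is complete only for exact $\mathbb{C}$-valued tensors. Thus the soundness of the gadget cannot be certified by a pure ZH-rewrite and must be verified arithmetically via the enumeration above. The remaining verifications---that the gadget is in \stwoSAT form and that $\phi'$ has the claimed size and polynomial construction time---are routine.
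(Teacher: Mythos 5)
Your proposal is correct and matches the paper's proof in essence: the paper also replaces each clause H-box by a per-clause gadget consisting of one new selector spider plus $r$ "doubling" spiders attached via two-legged zero-labeled H-boxes, so that an unsatisfied clause contributes $1+2^r\equiv 0 \pmod{2^r+1}$ extensions while a satisfied one contributes $1$, verifying the key identity by direct tensor evaluation (i.e.\ arithmetically, modulo $M$) just as you do. The only difference is presentational: the paper states the gadget as a single diagram equation rather than enumerating assignments, and your size/time accounting agrees with its $O(n+mr)$ bound.
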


    \begin{proof}
        By evaluating the tensors, we have $\tikzfig{completeness-zhz2box}$ and therefore:
        \begin{equation}
            \tikzfig{reductions/sat-to-2sat-2kp1}
        \end{equation}
        In this way we can rewrite all of the clauses in $\phi$ to form a suitable $\phi'$.
    \end{proof}

    \begin{lemma}
        \label{satto2satfixk}
        For any $M > 2$ and $\phi \in \sSATk{M}$ with $n$ variables and $m$ clauses, there is a $\phi' \in \stwoSATk{M}$ with $O(n + mM)$ variables such that $\scountk{M}{\phi} = \scountk{M}{\phi'}$, and $\phi'$ can be computed in $O(\mathrm{poly}(n, m, M))$ time.
    \end{lemma}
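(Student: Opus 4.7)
The plan is to adapt the per-clause rewriting strategy of Lemma \ref{satto2sat2kp1} to general $M$, by replacing each $k$-ary $0$-labeled H-box in the ZH-diagram of $\phi$ with a subdiagram containing only $2$-ary H-boxes, Z-spiders, and X-spiders, at the cost of $O(M)$ auxiliary Z-spiders (new variables) per clause. Summing over the $m$ clauses will give the desired $O(n + mM)$ bound, and since all H-boxes in the resulting diagram are binary, the result corresponds to a valid $\stwoSAT$ instance.

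The first step is to obtain a general clause-splitting identity. Starting from a $k$-ary $0$-labeled H-box, I would fuse it with a ring of auxiliary Z-spiders so as to peel off one wire at a time, reducing the problem to a $(k-1)$-ary H-box together with a $2$-ary piece. Iterating yields a ``chain'' of $2$-ary H-boxes connected through fresh Z-spiders. The challenge is that, in contrast to the very clean rewrite in Lemma \ref{satto2sat2kp1} which uses the identity $2^r \equiv -1 \pmod{M}$, this splitting introduces a multiplicative scalar (typically a power of $2$) that must be cancelled modulo $M$.

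The second step is to design a compensating scalar subdiagram out of $O(M)$ auxiliary Z-spiders whose tensor value is the desired inverse modulo $M$ of the scalar produced by the splitting. Since the scalars arising from Lemma \ref{hboxpi} and the ZH fusion rules are expressible as small integers, the goal reduces to representing a target residue modulo $M$ by an integer tensor. For odd $M$ this can be done by exploiting Euler's theorem to find some $r \leq \phi(M) \leq M$ with $2^r \equiv 1 \pmod{M}$, giving a clean ``padding'' gadget using $r$ auxiliary variables. For even $M$, I would factor $M = 2^s M'$ and combine by the Chinese Remainder Theorem: handle $M'$ via the odd case, and directly build a gadget modulo $2^s$ using sums of Z-spiders (so that adding $2^s$ copies contributes $0 \pmod{M}$). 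In both sub-cases the auxiliary count is $O(M)$.

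The main obstacle is this compensating gadget: the $2^r+1$ case benefits from a single clean modular identity, while general $M$ requires carefully controlling the value of a scalar ZH-subdiagram modulo a possibly composite integer. Once the gadget is in place, the rest is bookkeeping: verify that the rewritten clause evaluates to the same integer value modulo $M$ as the original H-box, check that every remaining H-box is binary and $0$-labeled so the diagram encodes a legitimate $\stwoSATk{M}$ instance, and tally the variable and time costs to obtain $O(n + mM)$ variables and $O(\mathrm{poly}(n, m, M))$ construction time.
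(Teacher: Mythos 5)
Your step 1 is where the real content of the lemma lies, and as you have set it up it cannot work. You propose to peel the $k$-ary $0$-labelled H-box into a chain of $2$-ary pieces that is \emph{exactly} correct up to a multiplicative scalar (a power of $2$), and then cancel that scalar modulo $M$ with a separate gadget. No such decomposition exists if the $2$-ary pieces are themselves $0$-labelled H-boxes (genuine $2$-clauses) and the auxiliary generators are Z- and X-spiders (fresh variables and negations): the replacement would be a $2$-CNF $\psi(x_1,\dots,x_k,\vec y)$ whose number of satisfying extensions is $0$ on the clause's falsifying assignment and a fixed constant $2^t$ on every other assignment. Solution sets of $2$-CNFs are closed under coordinatewise majority, this closure survives projection onto the $x$-variables, and the set of assignments with ``not all literals false'' is not majority-closed for $k\ge 3$; equivalently, such a gadget would give a satisfiability-preserving polynomial-time map from \SAT to \twoSAT and hence $\textbf{P}=\NP$. (If you instead allow non-$0$ labels on the $2$-ary boxes the peeling is possible, but then the output is not a \stwoSAT instance, and your closing ``check that every remaining H-box is binary and $0$-labelled'' is exactly what fails.) So the congruence must be built into the clause gadget itself: you need a $2$-CNF gadget whose count, as a function of the clause's variables, is congruent mod $M$ to the OR-tensor without being proportional to it over $\mathbb{Z}$. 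Once you have that, there is no leftover scalar, so the Euler-theorem padding and the CRT case split address a problem that does not arise; moreover the CRT step as sketched would yield two formulas (one correct mod $2^s$, one mod $M'$) rather than a single $\phi'$ with $\scountk{M}{\phi}=\scountk{M}{\phi'}$.

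The paper's proof is precisely such a direct per-clause gadget, uniform in $M$: it uses one tensor identity, stated for an arbitrary H-box label $r\in\mathbb{C}$, to replace each $0$-labelled clause box by a subdiagram of $O(M)$ fresh Z-spiders and $2$-ary $0$-labelled H-boxes whose value agrees with the clause modulo $M$ (in effect realizing an $M$-labelled H-box, and $M\equiv 0 \bmod M$), with no compensating scalar, no invertibility issues, and no case distinction between odd and even $M$; the $O(n+mM)$ variable count then follows immediately. A concrete gadget of this kind, in Boolean terms: add fresh variables $y_1,\dots,y_{M-1}$ with $2$-clauses $y_{j+1}\Rightarrow y_j$ and $y_1\Rightarrow \ell_i$ for each literal $\ell_i$ of the clause; this contributes a factor $1$ whenever some literal is true and a factor $M\equiv 0$ when all are false. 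Replacing your steps 1--2 with a gadget of this form, and then doing the bookkeeping you already outline, yields the lemma.
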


    \begin{proof}
        By evaluating the tensors, we have $\tikzfig{reductions/sat-to-2sat-fixk-add}$ for all $r \in \mathbb{C}$. Therefore:
        \begin{equation}
            \tikzfig{reductions/sat-to-2sat-fixk}
        \end{equation}
        In this way we can rewrite all of the clauses in $\phi$ to form a suitable $\phi'$.
    \end{proof}
    
    \begin{theorem}
        \label{satto2sat}
        We have the following:
        \begin{enumerate}
            \item $\stwoSATk{M}$ is $\sPk{M}$-complete for any $M \geq 2$.
            \item \stwoSAT is \sP-complete.
        \end{enumerate}
    \end{theorem}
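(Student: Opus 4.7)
The plan is to prove both parts by assembling the preceding lemmas; membership of \stwoSAT in \sP (and $\stwoSATk{M}$ in $\sPk{M}$) is immediate from the definitions, so only hardness needs argument.

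For part 1, I would split on the value of $M$. For $M > 2$, Lemma~\ref{satto2satfixk} directly provides a polynomial-time reduction from $\sSATk{M}$, which is $\sPk{M}$-complete by the Cook--Levin theorem as discussed in Section~\ref{sec:prelim}, to $\stwoSATk{M}$, preserving counts modulo $M$ and thus witnessing $\sPk{M}$-hardness. The edge case $M = 2$ is covered by Lemma~\ref{satto2sat2kp1} with $r = 0$, since $2^0 + 1 = 2$.

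For part 2, the idea is to recover an exact count from a sufficiently large modular count. Given $\phi \in \sSAT$ with $n$ variables, we have $0 \le \scount{\phi} \le 2^n$, so choosing $M = 2^n + 1$ makes $M$ strictly exceed the count. Since $M$ is of the form $2^r + 1$ with $r = n$, Lemma~\ref{satto2sat2kp1} applies and runs in $O(\mathrm{poly}(n, m))$ time, producing a \stwoSAT-instance $\phi'$ with $O(n + mn)$ variables such that $\scountk{M}{\phi} = \scountk{M}{\phi'}$. A single query to a \stwoSAT-oracle returns $\scount{\phi'}$, whose residue modulo $2^n + 1$ then equals $\scount{\phi}$ exactly. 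This is a Cook reduction from \sSAT to \stwoSAT, establishing \sP-hardness.

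There is no significant obstacle; the only point worth verifying carefully is that in part 2 the parameter $r = n$ grows with the input, but since Lemma~\ref{satto2sat2kp1}'s runtime is $O(\mathrm{poly}(n, m, r))$, substituting $r = n$ keeps it polynomial in the size of $\phi$. Thus the proof is essentially a packaging of the two previous lemmas plus the observation that a large enough modulus lifts modular counting to exact counting.
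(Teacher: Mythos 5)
Your proposal is correct and mirrors the paper's own proof: part 1 is handled by Lemma~\ref{satto2satfixk} for fixed $M > 2$ and Lemma~\ref{satto2sat2kp1} with $r = 0$ for $M = 2$, and part 2 uses the bound $0 \le \scount{\phi} \le 2^n$ to work modulo $2^n + 1$ via Lemma~\ref{satto2sat2kp1} with $r = n$, exactly as in the paper. Your extra remark that the runtime stays polynomial when $r = n$ grows with the input is a fair point of care but raises no new issue.
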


    \begin{proof}
        \hfill
        \begin{enumerate}
            \item If $M = 2$, this follows from Lemma \ref{satto2sat2kp1} with $r = 0$. If $M > 2$, then since $M$ is fixed, this follows from Lemma \ref{satto2satfixk}.
            \item For any $\phi \in \sSAT$ with $n$ variables, note that $0 \leq \scount{\phi} \leq 2^n$. Hence $\scount{\phi} = \scountk{2^n + 1}{\phi}$, and so we can apply Lemma \ref{satto2sat2kp1} with $r = n$ to generate $\phi' \in \stwoSAT$ such that $\scount{\phi} = \scountk{2^n + 1}{\phi'} = \scount{\phi'} \mod 2^n + 1$ in polynomial time, giving a polynomial-time counting reduction from $\sSAT$ to $\stwoSAT$. \qedhere
        \end{enumerate}
    \end{proof}

    \begin{corollary}
        $\stwoSATvark{M}{PL}$ is $\sPk{M}$-complete for any $M \geq 2$, and $\stwoSATvar{PL}$ is $\sP$-complete.
    \end{corollary}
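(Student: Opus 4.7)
The plan is to chain the two main reductions already proved. By Theorem~\ref{sattoplsatthm}, the planar restrictions $\sSATvar{PL}$ and $\sSATvark{M}{PL}$ are complete for $\sP$ and $\sPk{M}$ respectively. By Theorem~\ref{satto2sat}, $\sSAT$ reduces (counting-preservingly) to $\stwoSAT$, and $\sSATk{M}$ reduces to $\stwoSATk{M}$. What remains is to verify that the second reduction, when fed a planar instance, produces a planar instance, i.e.\ that Lemmas~\ref{satto2sat2kp1} and~\ref{satto2satfixk} preserve planarity.

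I would argue planarity-preservation by inspection of the rewrites. In each case the rewrite is purely local in the ZH-diagram: a single $k$-ary H-box (together with its incident X-spiders and wires) is replaced by a small gadget made of binary H-boxes and a few fresh auxiliary Z-spiders, connected to the same external wires in the same cyclic order around the boundary of the gadget. Since the replacement gadget is itself drawn in a planar way and exposes its external wires on its outer boundary, it can be inserted inside a small disk around the original H-box without creating new crossings. Thus a planar $\sSAT$ (or $\sSATk{M}$) diagram is sent to a planar $\stwoSAT$ (or $\stwoSATk{M}$) diagram, so the reduction restricts to a reduction $\sSATvar{PL} \to \stwoSATvar{PL}$ (respectively $\sSATvark{M}{PL} \to \stwoSATvark{M}{PL}$).

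Combining the two reductions then yields, for any $M \geq 2$, a polynomial-time counting reduction from $\sSATvark{M}{PL}$ to $\stwoSATvark{M}{PL}$ establishing $\sPk{M}$-hardness, and from $\sSATvar{PL}$ to $\stwoSATvar{PL}$ establishing $\sP$-hardness. Membership is immediate since $\stwoSATvar{PL} \subseteq \sSAT \in \sP$ and likewise modulo $M$. The main obstacle I expect is simply to nail down the planarity-preservation statement rigorously—in particular, to check that for every clause replacement the cyclic order of the external legs of the gadget can be chosen to agree with the cyclic order of the wires at the H-box it replaces; this is really just pointing at the explicit diagrams of Lemmas~\ref{satto2sat2kp1} and~\ref{satto2satfixk} and observing that they are planar gadgets whose external wires can be permuted arbitrarily around the boundary (since the adjacent variable Z-spiders are symmetric under permutation of their legs).
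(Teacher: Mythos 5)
Your proposal is correct and follows essentially the same route as the paper: the paper's proof likewise observes that the clause rewrites of Lemmas~\ref{satto2sat2kp1} and~\ref{satto2satfixk} are local planar gadgets that preserve planarity, and then chains Theorem~\ref{sattoplsatthm} with Theorem~\ref{satto2sat}. Your extra remarks about the cyclic order of external legs just make explicit the planarity-preservation the paper asserts by inspection.
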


    \begin{proof}
        Note that the reductions given in Lemmas \ref{satto2sat2kp1} and \ref{satto2satfixk} preserve the planarity of the input instance. Hence this follows by first applying Lemma \ref{sattoplsatthm} and then Theorem \ref{satto2sat}. 
    \end{proof}

    \begin{corollary}
        $\stwoSATvark{M}{BI}$ is $\sPk{M}$-complete for any $M \geq 2$, and $\stwoSATvar{BI}$ is \sP-complete. 
    \end{corollary}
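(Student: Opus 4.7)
The plan is to follow exactly the pattern of the preceding \textbf{PL}-corollary: verify that the reductions in Lemmas \ref{satto2sat2kp1} and \ref{satto2satfixk} produce 2SAT instances whose primal graphs are bipartite, and then invoke Theorem \ref{satto2sat}.

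The first step is to inspect the ZH-diagram gadgets used in the proofs of the two lemmas. Each gadget replaces a degree-$k$ H-box (representing an original clause) with a sub-diagram consisting of degree-2 H-boxes (the new 2-clauses) and freshly introduced auxiliary Z-spiders. The natural 2-colouring to try places all original variables of $\phi$ on one side of the bipartition and all newly introduced auxiliary variables on the other; this works provided every degree-2 H-box in a gadget has one original endpoint and one auxiliary endpoint. I would verify by direct inspection of the gadget diagrams that this is the case, giving a uniform bipartition across all gadgets.

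Should the direct check fail---that is, should any gadget produce a degree-2 H-box joining two auxiliaries---I would fall back to a refinement. Since each gadget is a small, locally defined ZH sub-diagram, its internal auxiliary sub-graph is a short path or tree (hence bipartite), so one can 2-colour it by parity and verify that original variables attach only at positions of consistent parity. If any inconsistencies remain, one can subdivide an offending 2-clause by inserting an extra degree-2 Z-spider on the connecting wire, a parsimonious ZH-rewrite that adds exactly one auxiliary variable and one 2-clause without changing $\scountk{M}{\phi}$.

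The main obstacle I anticipate is ensuring consistency of the bipartition across gadgets, since the same original variable appears in many clauses and must lie on the same side in every gadget; because all original variables are placed on the same side by construction, this is automatic. Once bipartiteness is verified, the corollary follows exactly as in the preceding one: Theorem \ref{satto2sat} gives completeness of \stwoSAT and $\stwoSATk{M}$, and the bipartite output of the reductions upgrades this to \sP-completeness of \stwoSATvar{BI} and $\sPk{M}$-completeness of $\stwoSATvark{M}{BI}$.
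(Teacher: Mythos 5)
Your proposal lands on essentially the paper's argument, but only via your fallback, not your primary plan. The naive bipartition (all original variables on one side, all auxiliaries on the other) does not survive direct inspection: the gadgets of Lemmas \ref{satto2sat2kp1} and \ref{satto2satfixk} introduce $O(r)$ (resp.\ $O(M)$) auxiliary variables per clause linked to each other by size-two clauses, so there are auxiliary--auxiliary edges in the primal graph. What does work is exactly your refinement, which is the paper's proof: for each clause gadget $C_i$ the auxiliary subgraph is bipartite, say $C_i^A \cup C_i^B$, and each original variable attaches to the gadget only at vertices that can be taken to lie in $C_i^A$; then $V \cup C_1^B \cup \cdots \cup C_m^B$ versus $C_1^A \cup \cdots \cup C_m^A$ is a valid bipartition (using also that distinct gadgets share no edges and that every original clause, including those of size at most two, is passed through a gadget so that $V$ is an independent set). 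So the substance of your argument matches the paper.

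One contingency step in your write-up is not sound as stated, though it is never needed: ``subdividing an offending 2-clause by inserting an extra degree-2 Z-spider on the connecting wire'' does not add a variable or a clause in the \sSAT reading of the diagram --- a Z-spider joined by a plain wire to the adjacent variable spider simply fuses with it, so the instance is unchanged, and the naive clause-level subdivision $(x \lor y) \mapsto (x \lor z) \land (\lnot z \lor y)$ is not parsimonious. A genuine parsimonious edge subdivision requires an equality gadget (two clauses encoding $z \leftrightarrow y$) or the modulus-based rewrite the paper uses in Lemma \ref{2sattobi2satlemma} for the incidence graph. Since the gadgets already have the consistent-parity property, you can simply drop this repair clause; with that, the corollary follows from Theorem \ref{satto2sat} as you say.
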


    \begin{proof}
        When we apply Lemmas \ref{satto2sat2kp1} and \ref{satto2satfixk}, the \stwoSAT instance obtained will always be bipartite, so this follows from Theorem \ref{satto2sat}. We can see this as the primal graph has vertices in two groups: the set $V$ of vertices corresponding to variables of the original formula, and the sets $C_i$ of the vertices introduced to decompose clauses. The subgraph for each $C_i$ is clearly bipartite, so let $C_{i}^A$ and $C_{i}^B$ be the corresponding partition. Each vertex in $V$ only connects to at most one vertex $c_i$ in each $C_i$, and assume without loss of generality that $c_i \in C_{i}^A$. Then the whole graph can be partitioned as $V \cup C_{1}^{B} \cup \cdots \cup C_{m}^{B}$ and $C_{1}^A \cup \cdots \cup C_{m}^A$, so it is bipartite.
    \end{proof}

    \subsection{$\sSAT \to \sSATvar{MON}$}

    While in the previous section we showed that \stwoSAT was \sP-complete, other proofs \cite{valiant_complexity_1979-1} of this fact actually consider the subset \stwoSATvar{MON-BI}. In this section we give a reduction from \sSAT to \sSATvar{MON}, allowing us to remove negations from any CNF formula. This shows that our graphical method is not any less powerful than the reduction via the permanent, and we argue that this chain of reductions is more intuitive because it allows us to gradually restrict the formulae, rather than jumping straight to a highly restrictive variant.

    \begin{lemma}
        \label{sattomonosat}
        For any $r \geq 0$ and $\phi \in \sSATk{2^r}$ with $n$ variables, $m$ clauses, and maximum clause size at least two, there is a monotone $\phi' \in \sSATk{2^r}$ with $O(n + nmr)$ variables and $O(m + nmr)$ clauses such that $\scountk{2^r}{\phi} = \scountk{2^r}{\phi'}$. Additionally, $\phi'$ preserves the maximum clause size of $\phi$, and can be computed in $O(\mathrm{poly}(n, m, r))$ time.
    \end{lemma}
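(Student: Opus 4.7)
My plan is to mimic the structure of Lemmas~\ref{satto2sat2kp1} and~\ref{satto2satfixk}, but producing a local ZH-rewrite that is sound only modulo $2^r$, which replaces every X-spider in the diagram of $\phi$ by a fragment built entirely from Z-spiders and $0$-labelled H-boxes, at a cost of $O(r)$ additional Z-spiders and H-boxes per X-spider. Since in the encoding of Section~\ref{sec:prelim} an X-spider sits on a wire between a variable Z-spider and a clause H-box exactly when the variable occurs \emph{unnegated} in that clause, eliminating every X-spider turns the diagram of $\phi$ into the diagram of a monotone CNF formula $\phi'$ (with no unnegated literals), with the same count modulo $2^r$.

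Concretely, I would first observe that $\phi$'s diagram contains at most $nm$ X-spiders (at most one per variable per clause), and that it suffices to rewrite a single X-spider on a wire at a time. The core of the proof is then to exhibit a ZH-identity of the schematic form
\begin{equation}
\tikz{\node[X phase dot]{\(\pi\)};} \quad \equiv \quad (\text{monotone gadget with }O(r)\text{ new Z-spiders and H-boxes}) \pmod{2^r},
\end{equation}
and to verify it by direct tensor evaluation, exactly as in the proofs of Lemmas~\ref{satto2sat2kp1} and~\ref{satto2satfixk}. The intuition is that modulo $2^r$ we have $-1 \equiv 2^r - 1 = \sum_{i=0}^{r-1} 2^i$, so the ``$-1$'' entry of the X-spider tensor (which is what breaks monotonicity) can be matched by a gadget whose error contribution is divisible by $2^r$. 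The hypothesis that $\phi$ has maximum clause size at least two is what allows the gadget to be attached inside the ambient $0$-labelled H-box representing the clause, so the rewrite does not need to touch degree-one H-boxes.

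Having established the local rewrite, I would apply it simultaneously at every X-spider of $\phi$'s diagram and read off the resulting diagram as a monotone CNF formula $\phi'$. The accounting is straightforward: at most $nm$ X-spiders each contribute $O(r)$ new Z-spiders and $O(r)$ new $0$-labelled H-boxes, giving $n + O(nmr)$ variables and $m + O(nmr)$ clauses; the gadget H-boxes all have $O(1)$ legs and the arities of the original clause H-boxes are not increased, so the maximum clause size is preserved; and the entire rewriting procedure is local and runs in $O(\mathrm{poly}(n,m,r))$ time.

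The main obstacle is the construction and verification of the gadget, i.e.\ finding a concrete monotone ZH-fragment of size $O(r)$ whose tensor agrees with the X-spider modulo $2^r$. Once such a gadget is exhibited and shown sound by direct tensor computation, all other parts of the lemma---size bounds, preservation of clause size, time complexity, and the reading-off of a CNF formula from the rewritten diagram---follow from the same template as the earlier reductions in this section.
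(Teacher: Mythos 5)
Your high-level plan coincides with the paper's strategy (replace each X-spider, i.e.\ each negation, by a local monotone gadget of size $O(r)$ that is correct modulo $2^r$, then do the counting of at most $nm$ negations), but the proposal defers exactly the step that constitutes the lemma's content: you never exhibit the gadget, and you say yourself that its ``construction and verification'' is the main obstacle. Without it there is no proof, because it is not a priori clear that such a monotone fragment of size $O(r)$ exists at all --- that existence claim \emph{is} the lemma, up to bookkeeping. For reference, the paper's construction goes through the Tseytin transformation of the NOT gate: introducing $y = \lnot x$ via $(x \lor y) \land (\lnot x \lor \lnot y)$, the all-negative clause is already monotone, and the offending positive clause $(x \lor y)$, whose tensor is $0$ at $x{=}y{=}0$ and $1$ elsewhere, is congruent modulo $2^r$ to the tensor with entry $2^r$ at $(0,0)$; that tensor is realized monotonically (this is the ``$2$-labelled H-box'' identity in the paper) by $r$ fresh variables $z_1,\dots,z_r$ each constrained by $(\lnot z_i \lor \lnot x)$ and $(\lnot z_i \lor \lnot y)$, since each $z_i$ is free precisely when $x = y = 0$. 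This gives $O(r)$ new variables and $O(r)$ new size-two all-negative clauses per negation, which is the bound you assumed.

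Two of your side remarks are also off. First, the $2$-legged $\pi$-phase X-spider in this encoding is the NOT tensor with entries $0$ and $1$; it has no $-1$ entry, so the intuition ``$-1 \equiv \sum_{i=0}^{r-1} 2^i$'' is aimed at the wrong obstruction --- the relevant congruence is $0 \equiv 2^r \pmod{2^r}$, used to replace an unrealizable zero entry by a multiplicity of $2^r$ generated by free auxiliary variables. Second, the hypothesis that the maximum clause size is at least two is not about attaching the gadget ``inside'' a clause H-box or avoiding degree-one H-boxes; it is needed because the gadget introduces clauses of size exactly two, so the claim that $\phi'$ preserves the maximum clause size of $\phi$ would fail if $\phi$ had only unit clauses.
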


    \begin{proof}
        By evaluating the tensors, we have $\tikzfig{reductions/sat-to-monosat-2box}$ and therefore,
        \begin{equation}
            \tikzfig{reductions/sat-to-monosat}
        \end{equation}
        where the first equality follows from the Tseytin transformation of the NOT gate \cite{tseitin_complexity_1983}. Thus we can remove every negation in $\phi$ as follows:
        \begin{equation}
            \tikzfig{reductions/sat-to-monosat-final}
        \end{equation}
        There are at most $nm$ negations in $\phi$, and each can be rewritten with $O(r)$ clauses and variables. Note that this rewrite introduces only clauses of size two, so the maximum clause size is preserved.
    \end{proof}

    \begin{theorem}
        \label{sattomonosattheorem}
        We have the following:
        \begin{enumerate}
            \item $\skSATvark{2^r}{MON}$ and $\sSATvark{2^r}{MON}$ are $\sPk{2^r}$-complete for any $r \geq 0$ and $k \geq 2$. 
            \item $\skSATvar{MON}$ and $\sSATvar{MON}$ are $\sP$-complete for any $k \geq 2$.
        \end{enumerate}
    \end{theorem}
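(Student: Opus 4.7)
The plan is to mirror the proof of Theorem \ref{satto2sat}, using Lemma \ref{sattomonosat} in place of Lemmas \ref{satto2sat2kp1} and \ref{satto2satfixk}. Membership of $\skSATvark{2^r}{MON}$ in $\sPk{2^r}$ and of $\skSATvar{MON}$ in $\sP$ is immediate, since these are syntactic restrictions of the corresponding unrestricted classes. Only the hardness direction requires work; the key observation is that Lemma \ref{sattomonosat} preserves the maximum clause size of its input, so it also yields a reduction $\skSATk{2^r}\to\skSATvark{2^r}{MON}$, not just $\sSATk{2^r}\to\sSATvar{MON}$.

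For part 1, the plan is to reduce from $\skSATk{2^r}$, which is $\sPk{2^r}$-complete for every $k \geq 2$: the $k = 2$ case is Theorem \ref{satto2sat}, and for $k > 2$ hardness follows since $\stwoSATk{2^r}$ is syntactically contained. Lemma \ref{sattomonosat} with this fixed $r$ runs in polynomial time and produces a monotone instance of the same maximum clause size whose count agrees modulo $2^r$, giving the required Cook reduction. Dropping the bound on clause size, the same argument handles $\sSATvark{2^r}{MON}$.

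For part 2, I would start from $\phi \in \skSAT$ on $n$ variables ($\sP$-complete for $k \geq 2$, using Theorem \ref{satto2sat} when $k = 2$). Since $0 \leq \scount{\phi} \leq 2^n < 2^{n+1}$, we have $\scount{\phi} = \scountk{2^{n+1}}{\phi}$, exactly analogous to the $2^n+1$ trick used for \stwoSAT. Applying Lemma \ref{sattomonosat} with $r = n+1$ produces a monotone $\phi' \in \skSATk{2^{n+1}}$ of the same maximum clause size satisfying $\scountk{2^{n+1}}{\phi'} = \scount{\phi}$. A single oracle call to $\skSATvar{MON}$ returns $\scount{\phi'}$, whose residue modulo $2^{n+1}$ is precisely $\scount{\phi}$, yielding a polynomial-time Cook reduction. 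The only technicality is the ``maximum clause size at least two'' side condition of Lemma \ref{sattomonosat}, but this is harmless for $k \geq 2$ and instances consisting solely of unit clauses can be counted directly in linear time; the main conceptual point is simply the modular-to-exact translation in part 2, which here requires $2^r > 2^n$, so $r = n+1$ suffices.
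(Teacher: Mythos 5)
Your proposal is correct and follows essentially the same route as the paper: part 1 is the clause-size-preserving Lemma \ref{sattomonosat} applied with fixed $r$, and part 2 is the same modulus trick of taking $r = n+1$ so that $\scount{\phi} = \scountk{2^{n+1}}{\phi'}$. Your extra care about the $k=2$ hardness base case (via Theorem \ref{satto2sat}) and the unit-clause side condition of the lemma only fills in details the paper leaves implicit.
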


    \begin{proof}
        \hfill
        \begin{enumerate}
            \item This follows immediately from Lemma \ref{sattomonosat}. Since the transformation preserves maximum clause size, this holds for either bounded or unbounded clause size.
            \item For any $\phi \in \skSAT$ with $n$ variables and $k \geq 2$, note that $0 \leq \scount{\phi} \leq 2^n$. Hence $\scount{\phi} = \scountk{2^{n + 1}}{\phi}$, and so we can apply Lemma \ref{sattomonosat} with $r = n + 1$ to generate $\phi' \in \skSATvar{MON}$ such that $\scount{\phi} = \scountk{2^{n + 1}}{\phi'} = \scount{\phi'} \mod 2^{n + 1}$ in polynomial time, giving a polynomial-time counting reduction from $\skSAT$ to $\skSATvar{MON}$. The same argument also gives a reduction for $\sSATvar{MON}$. \qedhere
        \end{enumerate}
    \end{proof}

    \subsection{Other Reductions}

    Using similar methods we can also consider other restrictions of \sSAT. For example, in Appendix \ref{sec:combinedsat}, we combine Theorems \ref{sattoplsatthm}, \ref{satto2sat}, and \ref{sattomonosattheorem} with further reductions to show that \stwoSATvar{MON-BI-PL-3DEG}, where \textbf{3DEG-} indicates each variable participates in at most three clauses, is \sP-complete. This case is interesting because it is as small as possible - if we instead have each variable participate in only two clauses then this is in \textbf{P} \cite{dahllof_counting_2002}. Indeed most upper bounds on runtime for \stwoSAT have better special case for this type of formula \cite{wahlstrom_tighter_2008, dahllof_counting_2002}. 
    
    As phase-free ZH-diagrams naturally encode \sSAT instances, the ZH-calculus is mostly suited to treat variations on the \sSAT problem. To apply the technique of graphical reasoning to other (counting) problems, we hence may need to use other graphical calculi. In particular, in Appendix \ref{sec:perfmatch}, we show how the ZW-calculus \cite{hadzihasanovic_diagrammatic_2015} is naturally adapted to both the \textbf{\#XSAT} problem (of which \textbf{\#1-in-3SAT} is a special case) and the \textbf{\#PERFECT-MATCHINGS} problem, and use this shared structure to give graphical reductions showing that both are \sP-complete. This complements the recent result of Carette \textit{et al.}~illustrating with the ZW-calculus that \textbf{\#PLANAR-PERFECT-MATCHINGS} is in \textbf{P} \cite{carette_compositionality_2023}. 

    While the reductions given above contribute to simplifying the literature in their own right, we can also derive other simplifications from them. For example, the original proof by Valiant \cite{valiant_complexity_1979-2} (and the simplification by Ben-Dor and Halevi \cite{ben-dor_zero-one_1993}) that computing the permanent of a boolean matrix is \sP-complete relies on a reduction from \sthreeSAT. It would be simpler to reduce from \stwoSATvar{MON-}, but the original proof that \stwoSATvar{MON-} is \sP-complete relies on a reduction from the permanent, so this would be circular. 
    
    However, by Theorems \ref{satto2sat} and \ref{sattomonosattheorem} we have \sP-completeness for \stwoSATvar{MON-} independent of the permanent. This then allows us to give an alternate, simpler proof that computing the permanent of an integer matrix is \sP-complete, which we present in Appendix~\ref{sec:intperm}. 
    In the original proof of Ben-Dor and Halevi \cite{ben-dor_zero-one_1993}, they construct for a given \sthreeSAT instance a weighted directed graph with two cycles per variable and a gadget of seven vertices for each clause such that the permanent of the adjacency matrix of the graph equals the value of the \sthreeSAT instance. As finding a suitable clause gadget was difficult, they found a suitable one using computer algebra. 
    Our proof adapts theirs, but as we can start from a \stwoSATvar{MON-} instance, our graph can be made simpler, only requiring one cycle per variable, and a symmetric clause gadget of just four vertices. This was found, and can easily be proven correct, by hand.

\section{Evaluating Scalar ZH-Diagrams}
    \label{sec:evalzh}

    While we have so far shown that variants of \sSAT can be embedded into ZH-diagrams, and thus that the problem of evaluating an arbitrary scalar ZH-diagram is \sP-hard, we haven't yet answered how much harder it might be. I.e~whether this problem is in \sP. In this section we will show that evaluating scalar ZH-diagrams comprised of a certain fragment of generators is complete for $\textbf{FP}^{\sP}$. \textbf{FP} is the class of functions that can be evaluated in polynomial time by a deterministic Turing machine (i.e the function analog of \textbf{P}), and $\textbf{FP}^{\sP}$ is thus the class of functions that can be evaluated in polynomial time by a deterministic Turing machine with access to an oracle for a \sP-complete problem (in our case we will use \sSAT).

    In order to consider the problem of evaluating scalar ZH-diagrams formally, we first define the problem $\mathrm{Eval\mhyphen}F$ which is the task of finding the complex number corresponding to a scalar diagram that exists in fragment $F$ of a graphical calculus. A fragment is a set of diagrams built from arbitrary combinations of a fixed subset of generators.

    \begin{definition}
        For a given fragment $F$, the problem $\mathrm{Eval\mhyphen}F$ is defined as follows:
        \begin{description}
            \item[Input] A scalar diagram $D \in F$ consisting of $n$ generators and wires in total, where any parameters of the generators of $D$ can be expressed in $O(\mathrm{poly}(n))$ bits.
            \item[Output] The value $D \in \mathbb{C}$.  
        \end{description}
        The runtime of an algorithm for $\mathrm{Eval\mhyphen}F$ is defined in terms of the parameter $n$. 
    \end{definition}

    We will examine two fragments $\mathrm{ZH}_{\pi/2^k} \supseteq \mathrm{ZH}_{\pi}$ and show that they can be reduced to \sSAT. Far from being purely academic, $\mathrm{ZH}_{\pi}$ is expressive enough to capture Toffoli-Hadamard quantum circuits, and $\mathrm{ZH}_{\pi/2^k}$ can additionally capture Clifford+T quantum circuits, both of which are approximately universal for quantum computation.

    \begin{definition}
        $\mathrm{ZH}_{\pi}$ is the fragment of ZH-calculus given by the following generators:
        \begin{equation}
            \tikzfig{completeness-zhzdef}
        \end{equation}
    \end{definition}

    \begin{lemma}
        \label{completenesstseytinxor}
        The following diagram equivalence holds:
        \begin{equation}
            \tikzfig{completeness-xor3sat}
        \end{equation}
        This is derived from the Tseytin transformation of the XOR operation \cite{tseitin_complexity_1983}.
    \end{lemma}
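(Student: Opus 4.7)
The plan is to verify this identity by direct tensor semantic evaluation, since both sides are finite ZH-diagrams with the same number of open wires and we just need to check that they represent the same tensor entry by entry. On the left-hand side the X-spider has the tensor interpretation of a parity indicator: it equals 1 on an assignment exactly when the XOR of all incident wire values is 0, up to a global normalization. On the right-hand side, the Tseytin gadget consists of a collection of zero-labeled H-boxes (each encoding the OR-clause that its incident bits are not all 1), Z-spiders that copy each variable out to all of its clause ports, and X-spiders that toggle polarity wherever a negated literal is needed, matching the encoding conventions already set out in the \sSAT-to-ZH translation in Section~\ref{sec:prelim}.

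First I would write down the exact CNF produced by the Tseytin transformation of an XOR constraint of the relevant arity, which is a fixed set of size-three clauses designed so that the constraint is violated on precisely the assignments of the wrong parity. Next I would apply the generator semantics: Z-spiders force agreement of all copies of each variable, zero-labeled H-boxes implement the logical OR of their inputs, and X-spiders on H-box ports flip the corresponding literal. Plugging this in, the right-hand side evaluates on any assignment of the open wires to the indicator of all four Tseytin clauses being simultaneously satisfied, which by the defining property of the transformation is precisely the XOR-parity indicator computed by the left-hand side.

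From here the verification reduces to enumerating the $2^n$ assignments of the open wires and checking agreement in each case, which is mechanical. The one subtlety is bookkeeping of scalar prefactors: hidden internal Z-spiders, the zero-labeled H-boxes, and each X-spider may each contribute a normalization, so one must compare the two sides on a single convenient assignment (for instance, the all-zeros assignment) to pin down a potential global scalar and, if needed, adjust by inserting the appropriate scalar generator on one side.

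The main obstacle I expect is precisely this scalar-matching step rather than the combinatorial core: the logical content is forced by the Tseytin transformation, but getting the normalizations to line up cleanly often requires care, especially since the paper's convention omits scalar generators where possible. If the bare diagrams do not agree exactly, an alternative is to derive the equivalence by rewriting, starting from the planar NAND-plus-XOR decompositions already used in the proof of Lemma~\ref{sattoplsat} and combining them with the H-box fusion and spider rules from Appendix~\ref{sec:zhrules}; since the ZH-calculus is complete, such a derivation is guaranteed to exist, and in practice it should be short because the semantic match is immediate.
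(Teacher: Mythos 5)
Your proposal is correct and matches the paper's justification: the paper gives no separate proof of this lemma, simply noting that the identity is the ZH-encoding of the Tseytin/CNF form of the XOR constraint, which is exactly the direct entry-by-entry semantic check you describe (with the scalar bookkeeping being the only point of care). Since the XOR constraint's CNF uses no auxiliary variables, the enumeration over the $2^3$ assignments of the open wires suffices, and your completeness-based fallback is unnecessary but harmless.
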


    \begin{theorem}
        \label{completenesszhzcomplete}
        There is a polynomial-time counting reduction from the problem $\mathrm{Eval\mhyphen}\mathrm{ZH}_{\pi}$ to \sthreeSAT and so $\mathrm{Eval\mhyphen}\mathrm{ZH}_{\pi}$ is in $\textbf{FP}^\sP$. Note that $\mathrm{Eval\mhyphen}\mathrm{ZH}_{\pi}$ is equivalent to the problem $\sSAT_\pm$ as defined in \cite{laakkonen2022graphical}.
    \end{theorem}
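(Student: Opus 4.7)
The plan is to show that every scalar diagram in $\mathrm{ZH}_\pi$ can be rewritten as a signed count $N_+ - N_-$, where $N_\pm$ are the solution counts of two explicit \sthreeSAT instances, so that its value can be obtained by two oracle calls and a subtraction.

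First, I would invoke the tensor semantics: any scalar diagram $D$ evaluates to
\begin{equation}
    D \;=\; \sum_{\vec x \in \mathbb{B}^N} \prod_g f_g(\vec x|_g),
\end{equation}
where $N$ is the number of internal wires, $g$ ranges over generators of the diagram, and $f_g$ is the local tensor of $g$. A case analysis of the generators of $\mathrm{ZH}_\pi$ (Z-spiders, $\pi$-labelled H-boxes, and derived X-spiders and Hadamards built from them) shows that each $f_g(\vec x|_g) \in \{0, +1, -1\}$. Therefore $D = N_+ - N_-$, where $N_\pm$ counts the assignments whose local factors multiply to $\pm 1$.

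Second, I would construct formulas $\phi_+, \phi_- \in \sthreeSAT$ with $\scount{\phi_\pm} = N_\pm$. Using spider fusion followed by iterated applications of Lemma~\ref{completenesstseytinxor}, I first decompose the diagram so that each H-box and spider has arity at most three, introducing only $O(\mathrm{poly}(n))$ auxiliary wires. For every local generator I then introduce a fresh Boolean sign-witness variable, constrained by a constant-size 3-CNF gadget to equal $1$ exactly when the local factor is $-1$, with an accompanying gadget forbidding assignments on which the local factor vanishes. The global sign of the product is then the XOR of these sign-witness variables, and Lemma~\ref{completenesstseytinxor} shows that this parity can itself be enforced by a polynomial-size 3-CNF gadget. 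Fixing that parity bit to $0$ yields $\phi_+$ and fixing it to $1$ yields $\phi_-$, both of size $O(\mathrm{poly}(n))$.

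Third, the algorithm queries a \sthreeSAT oracle for $\scount{\phi_+}$ and $\scount{\phi_-}$ and returns their difference. Since each step is polynomial-time, this places $\mathrm{Eval\mhyphen}\mathrm{ZH}_\pi$ in $\textbf{FP}^{\sP}$, and the remark about equivalence with $\sSAT_\pm$ of~\cite{laakkonen2022graphical} is immediate from the $N_+ - N_-$ expression. The main obstacle will be the sign bookkeeping: for each kind of $\mathrm{ZH}_\pi$ generator I must exhibit a small 3-CNF gadget whose satisfying assignments are in bijection with the assignments making that generator's local factor nonzero, together with a Boolean witness for the sign of that factor. Lemma~\ref{completenesstseytinxor} is precisely what makes the resulting global parity of signs expressible in 3-CNF, and hence what makes the whole reduction go through.
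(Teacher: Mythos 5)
Your proposal is correct in outline but takes a genuinely different route from the paper. The paper stays inside the calculus: it first rewrites every non-scalar H-box into two-legged zero-labelled H-boxes, then eliminates X-spiders by phase unfusion and Lemma~\ref{completenesstseytinxor}, fuses Z-spiders, and funnels all remaining $\pi$-phases into a single leftover $\pi$-phase Z-spider, which it finally splits as a sum of two phase-free diagrams; this yields two \sthreeSAT instances $f_1, f_2$ with $D = c(\scount{f_1} - \scount{f_2})$ for an explicitly computed scalar $c$. You instead argue semantically: view the scalar diagram as a weighted \#CSP with one Boolean variable per wire, observe that the local factors lie in $\{0,+1,-1\}$, attach a sign-witness variable to each generator, and enforce the global parity of the witnesses with the same Tseytin XOR lemma, splitting on the parity bit to get $\phi_+$ and $\phi_-$. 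Both proofs rest on the same two ingredients (a parsimonious XOR encoding and a final split into a difference of two counts answered by two oracle calls), but yours bypasses the ZH rewrite machinery entirely, which makes it more elementary and applicable to any tensor network with entries in $\{0,\pm1\}$, at the cost of abandoning the graphical presentation the paper is advocating. Two details need patching. First, the claim that every local factor lies in $\{0,\pm1\}$ fails for degenerate generators (a Z-spider with no legs evaluates to $2$), and depending on the normalisation convention for the derived X-spider further powers of two can appear; the paper absorbs all such contributions into the multiplier $c$, so your conclusion should read $D = c(\scount{\phi_+} - \scount{\phi_-})$ rather than a bare difference. Second, for the oracle answers to equal $N_\pm$ exactly, every gadget --- including the reduction of width-four constraints (three wires plus a witness) to clauses of size three --- must be parsimonious, i.e.\ the witness and auxiliary variables must be uniquely determined by the wire assignment; you correctly flag this as the main obstacle, and it is achievable with standard Tseytin-style encodings, but it should be carried out explicitly rather than asserted.
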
 

    \begin{proof}
        In order to rewrite a diagram $D$ from $\mathrm{ZH}_{\pi}$ into \sthreeSAT, we first rewrite all of the non-scalar H-boxes into zero H-boxes with two legs:
        \begin{equation}
            \tikzfig{completeness-zhzsplit}
        \end{equation}
        Where the second equality follows from Lemma A.4 in \cite{laakkonen2022graphical}. Now, we can remove all the X-spiders and $\pi$-phase Z-spiders as follows, to rewrite into a valid \sSAT diagram:
        \begin{enumerate}
            \item Any spiders or H-boxes with no legs should be removed from the diagram. Evaluate them by concrete calculation and multiply their values together to get a scalar multiplier $c$ for the diagram. If there are no such spiders or H-boxes, set $c = 1$.
            \item Extract the phases from all $\pi$-phase Z-spiders as follows:
            \begin{equation}
                \tikzfig{completeness-piunfuse}
            \end{equation}
            \item Unfuse the phase of every X-spider with at least two legs:
            \begin{equation}
                \tikzfig{completeness-xunfuse}
            \end{equation}
            \item For any X-spiders with at least three legs, unfuse them and apply Lemma \ref{completenesstseytinxor} to each X-spider: \begin{equation}
                \tikzfig{completeness-xorsplit}
            \end{equation}
            \item Replace X-spiders with one leg as follows: \begin{equation}
                \tikzfig{completeness-zbasis}
            \end{equation}
            \item Excepting the single Z-spider with a $\pi$-phase, use the $SF_Z$ rule to fuse Z-spiders wherever possible. If there are two two-legged zero H-boxes connected together directly, introduce a Z-spider between them using the $I_Z$ rule. 
        \end{enumerate}
        At this point, this diagram follows the form of a \sthreeSAT diagram except for a possible single $\pi$-phase Z-spider. To remove the $\pi$-phase Z-spider, we can write the diagram as a sum of two diagrams which don't contain this phase:
        \begin{equation}
            \tikzfig{completeness-negsum}
        \end{equation}
        Since these two diagrams are \sthreeSAT diagrams associated with some Boolean functions $f_1$ and $f_2$, the value of $D$ is then given by $D = c(\scount{f_1} - \scount{f_2})$.
    \end{proof}

    Note that this method of splitting an instance into positive and negative components is similar to Bernstein and Vazirani's proof that $\textbf{BQP} \subseteq \textbf{P}^{\sP}$ \cite[Theorem 8.2.5]{bernstein_quantum_1997}.  Now we can move on to considering larger fragments fairly easily. We will show that by using gadgets to copy certain ``magic states'' - that is, one-legged H-boxes with specific labels which we can split as sums - we can introduce phases which are multiples of $\frac{\pi}{2^k}$ for any fixed $k$.

    \begin{definition}
        $\mathrm{ZH}_{\pi/2^k}$ for $k \in \mathbb{N}$ is the fragment of ZH-calculus given by the following generators
        \begin{equation}
            \tikzfig{completeness-zhqidef}
        \end{equation}
        where $n \in \mathbb{Z}$.
    \end{definition}
    
    \begin{lemma}
        \label{completenesszhqicopy}
        The following diagram equivalence holds:
        \begin{equation}
            \tikzfig{completeness-zhqicopy}
        \end{equation}
    \end{lemma}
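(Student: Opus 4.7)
The plan is to verify this diagram equivalence by direct tensor evaluation, exploiting the soundness of the ZH-calculus to reduce the identity to an equality of complex-valued tensors. First, I would interpret both diagrams in the computational basis using the generator semantics: each wire carries a Boolean index, each Z-spider imposes a Kronecker delta forcing all its incident wires to carry the same bit, each X-spider imposes a parity constraint, and each H-box with label $a$ contributes a factor of $a$ precisely when every one of its wires carries the value 1.

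With this dictionary in hand, I would enumerate the surviving internal assignments on each side. On the left-hand side, the Z-spider at the root of the magic state collapses all connected wires to a single bit $b \in \{0,1\}$, so only two summands survive and the magic-state label contributes the phase $e^{i n \pi / 2^k}$ exactly when $b = 1$. On the right-hand side, the ``copy'' gadget arranges 2-legged H-boxes around a central Z-spider so that the same two uniform assignments survive, with an analogous phase accumulation. A short comparison of the resulting multilinear functions of the boundary bits then yields the desired equality.

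The main obstacle is scalar bookkeeping: introducing or eliminating Z-spiders on internal wires is transparent, but composing several 2-legged H-boxes around a central spider can produce spurious factors of $\tfrac{1}{2}$ or $\sqrt{2}$ if one is not careful about the normalisation conventions inherited from \cite{backens_zh_2019}. If instead a purely diagrammatic derivation is preferred, one could apply the H-box distributivity rule together with Z-spider fusion and the identity rule to rewrite the gadget into its compact form in a handful of steps; by completeness of the ZH-calculus, both approaches are guaranteed to succeed, but I would expect the direct tensor computation to be the shortest and most transparent presentation, with the rewrite-based derivation kept in reserve in case the scalar reconciliation is needed for downstream applications of the lemma.
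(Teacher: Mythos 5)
There is a real gap here: your proposal is a plan to verify the equation by tensor evaluation, but the evaluation is never carried out, and the structure you ascribe to the two diagrams does not match what the lemma actually asserts. The content of this lemma is a \emph{copying} identity for the one-legged $a$-labelled H-box (the ``magic state'' with $a=e^{i\pi/2^{k}}$): a pair of such boxes attached to two wires can be traded for a \emph{single} $a$-labelled box attached through a parity-type connection, together with a two-legged $a^{2}$-labelled H-box on those wires. This is exactly what makes the downstream argument in the fragment reduction work, since $a^{2}=e^{i\pi/2^{k-1}}$ lies in $\mathrm{ZH}_{\pi/2^{k-1}}$ and only one $a$-labelled box survives to be split as a sum. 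The algebraic fact the proof must establish is therefore $a^{x}a^{y}=a^{x\oplus y}\,(a^{2})^{xy}$ (equivalently $x+y=(x\oplus y)+2xy$), and this identity appears nowhere in your argument. Instead you assert that on both sides ``only two uniform assignments survive,'' which presupposes that both sides have the same support and, more importantly, is not the shape of the right-hand side: the gadget is not a collection of two-legged H-boxes around a central Z-spider forcing all wires equal, and the equality is not a matter of two surviving summands matching up. You also flag the scalar normalisation of the derived X-spider as ``the main obstacle'' but leave it unresolved, which is precisely where a hand-waved tensor comparison would go wrong.

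For comparison, the paper's proof is a two-step diagrammatic rewrite whose first equality is imported from the graphical Fourier theory of \cite{kuijpers_graphical_2019} (their Lemma 3.2), followed by elementary spider manipulations. A direct tensor-evaluation proof would be perfectly acceptable in this paper's style (several lemmas begin ``By evaluating the tensors, we have\dots''), so your chosen method is not wrong in principle; but to count as a proof you would need to write down the interpretation of each side as a function of the boundary bits $x,y$, verify the exponent identity above including the correct scalar factor coming from the X-spider normalisation, and state the result for the general label $a$ so that it can be iterated to fold $m$ magic states into one.
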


    \begin{proof} We have that
        \begin{equation}
            \tikzfig{completeness-zhqicopy-proof}
        \end{equation}
        where the first equality follows from \cite[Lemma 3.2]{kuijpers_graphical_2019}.
    \end{proof}
    
    \begin{lemma}
        \label{completenesszhqicomplete}
        For all $k > 0$, there is a reduction from $\mathrm{Eval\mhyphen}\mathrm{ZH}_{\pi/2^k}$ to $\mathrm{Eval\mhyphen}\mathrm{ZH}_{\pi/2^{k - 1}}$.
    \end{lemma}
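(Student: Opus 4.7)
The plan is to give a polynomial-time Turing reduction from $\mathrm{Eval\mhyphen}\mathrm{ZH}_{\pi/2^k}$ to $\mathrm{Eval\mhyphen}\mathrm{ZH}_{\pi/2^{k-1}}$ by isolating each ``hard'' phase into a one-legged \emph{magic state} H-box and then splitting these magic states as sums of diagrams living in the coarser fragment. The structure parallels the sum-decomposition of the $\pi$-phase Z-spider used in Step~6 of the proof of Theorem~\ref{completenesszhzcomplete}.

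First, I would normalise every H-box of $D$. A label in $\mathrm{ZH}_{\pi/2^k}$ has the form $e^{i\pi n/2^k}$ for $n \in \mathbb{Z}$; if $n$ is even the H-box already lies in $\mathrm{ZH}_{\pi/2^{k-1}}$. Otherwise, I apply Lemma~\ref{completenesszhqicopy} to copy the hard part of the label off the multi-legged H-box onto a fresh one-legged magic-state H-box with label $e^{i\pi/2^k}$, leaving behind a multi-legged H-box with coarser label $e^{i\pi(n-1)/2^k}$. After this step, $D$ differs from a $\mathrm{ZH}_{\pi/2^{k-1}}$-diagram only by a polynomial number of one-legged magic-state H-boxes, connected to the rest of the diagram via Z-spider gadgets.

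Second, I would split each magic state as a sum of two one-legged diagrams both in $\mathrm{ZH}_{\pi/2^{k-1}}$. Distributing these sums through the tensor contractions expresses $D$ as a sum of coarser scalar diagrams, each of which can be evaluated by a single oracle call; summing the answers then yields the value of $D$. To prevent an exponential blow-up across many magic states, the ``choice wires'' produced by different splittings can be bundled through a single shared Z-spider register that itself lies in $\mathrm{ZH}_{\pi/2^{k-1}}$, keeping the number of oracle calls polynomial (and, with a careful global arrangement, constant).

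The main obstacle will be constructing the magic-state splitting concretely. Because $e^{i\pi/2^k}$ is a square root of the finest phase $e^{i\pi/2^{k-1}}$ available in the target fragment, no finite $\mathbb{Q}(e^{i\pi/2^{k-1}})$-linear combination of coarser one-legged H-boxes can equal the magic state, so the decomposition must exploit the tensor network structure non-trivially — for instance through a doubling gadget in which two coarser H-boxes contracted through a Z- or X-spider realise the required square-root phase. Once such a splitting is exhibited, the remaining verifications (that the rewritten diagram has the same scalar value as $D$, that its size is polynomial in $|D|$, and that only polynomially many oracle calls are needed) are routine.
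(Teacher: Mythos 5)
There is a genuine gap, and in fact you have misplaced where the difficulty lies. You declare the ``main obstacle'' to be constructing the magic-state splitting, arguing that since $e^{i\pi/2^k}$ is not expressible over the coarser phases, no combination of coarser one-legged H-boxes can equal the magic state, so that some unconstructed ``doubling gadget'' is needed. But the reduction is a Turing/counting reduction whose output is a complex number, so the coefficients of the sum need not be realised inside the fragment at all: with $a=e^{i\pi/2^k}$, the one-legged $a$-labelled H-box denotes $\ket{0}+a\ket{1}$, which splits as the branch $\ket{0}$ plus $a$ times the branch $\ket{1}$, both branches lying already in $\mathrm{ZH}_{\pi}$, with the scalar $a$ multiplied classically in post-processing. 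This is exactly how the paper finishes, obtaining $D=c(D_1+aD_2)$ from two oracle calls. So the step you flag as hard is trivial, yet you never actually exhibit it, and your argument leans on it.

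The genuinely nontrivial point --- avoiding exponential blow-up when there are polynomially many odd-phase spiders --- is the one you leave to an unworked device, the ``choice wires bundled through a single shared Z-spider register,'' and as stated this does not preserve the semantics: plugging all the split branches into one Z-spider replaces the product $\bigl(\ket{0}+a\ket{1}\bigr)^{\otimes t}$ by something of the form $\ket{0\cdots 0}+a^{t}\ket{1\cdots 1}$, and if instead you try to group the $2^{t}$ terms by how many magic states take the $\ket{1}$ branch, you need a symmetric selector that you have not shown lives in $\mathrm{ZH}_{\pi/2^{k-1}}$. The paper resolves this \emph{before} splitting: Lemma \ref{completenesszhqicopy} (derived from the graphical Fourier-theory copy rule) folds any two one-legged $a$-labelled H-boxes into a single one at the cost of a gadget containing only $a^{2}$-labelled boxes, which do belong to $\mathrm{ZH}_{\pi/2^{k-1}}$; iterating leaves exactly one magic state, which is then split into just two diagrams $D_1,D_2$, so two oracle calls and the classical computation $c(D_1+aD_2)$ suffice. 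Without this folding step (or a concrete alternative such as polynomial interpolation in the magic-state label), your proposal establishes neither correctness nor a polynomial bound on the number of oracle calls.
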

    
    \begin{proof}
        Let $D$ be a diagram in $\mathrm{ZH}_{\pi/2^k}$. In order to rewrite this into a diagram in $\mathrm{ZH}_{\pi/2^{k - 1}}$, we need to remove all Z-spiders with phases that are odd multiples $n$ of $\frac{\pi}{2^k}$, since even multiples of $\frac{\pi}{2^k}$ are already valid for $\mathrm{ZH}_{\pi/2^{k - 1}}$. Assume that $n = 2m + 1$, then:
        \begin{equation}
            \tikzfig{completeness-zhk-phasesplit}
        \end{equation}
        Let $a = e^{\frac{i\pi}{2^k}}$, then by applying Lemma \ref{completenesszhqicopy}, fold up all of $a$-labeled H-boxes into one
        \begin{equation}
            \tikzfig{completeness-zhqicopyimpl}
        \end{equation}
        and note that $\tikzfig{completeness-zhk-a2}$ is in $\mathrm{ZH}_{\pi/2^{k - 1}}$. Remove all scalar H-boxes from $D$ and let their product be $c$. Finally, we can split the remaining $a$-labeled H-box, giving $D$ as the sum of two diagrams $D_1$ and $D_2$ in $\mathrm{ZH}_{\pi/2^{k - 1}}$:
        \begin{equation}
            \tikzfig{completeness-zhk-boxsplit}
        \end{equation}
        so then $D = c(D_1 + aD_2)$.
    \end{proof}

    \begin{theorem}
        $\mathrm{Eval\mhyphen}\mathrm{ZH}_{\pi/2^k}$ is $\textbf{FP}^\sP$-complete for all $k \geq 0$.
    \end{theorem}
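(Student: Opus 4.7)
The plan is to establish both directions: membership in $\textbf{FP}^\sP$ and $\textbf{FP}^\sP$-hardness, each for every fixed $k \geq 0$.

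For hardness, I would observe that the \sSAT embedding from Section~\ref{sec:prelim} produces only Z-spiders, zero-labeled H-boxes, and X-spiders, all of which lie in $\mathrm{ZH}_{\pi}$. Since $\mathrm{ZH}_{\pi} \subseteq \mathrm{ZH}_{\pi/2^k}$ for every $k \geq 0$, this is already a polynomial-time many-one reduction from \sSAT to $\mathrm{Eval\mhyphen}\mathrm{ZH}_{\pi/2^k}$ that sends $\phi$ to a diagram whose value is exactly $\scount{\phi}$. Because \sSAT is \sP-complete, any function in $\textbf{FP}^\sP$ can be computed by a polynomial-time Turing machine with oracle access to $\mathrm{Eval\mhyphen}\mathrm{ZH}_{\pi/2^k}$, which is precisely $\textbf{FP}^\sP$-hardness.

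For membership, I would proceed by induction on $k$. The base case $k = 0$ is Theorem~\ref{completenesszhzcomplete}, which places $\mathrm{Eval\mhyphen}\mathrm{ZH}_{\pi}$ in $\textbf{FP}^\sP$ via reduction to \sthreeSAT. For the inductive step, Lemma~\ref{completenesszhqicomplete} supplies a polynomial-time reduction that expresses any $D \in \mathrm{ZH}_{\pi/2^k}$ as $D = c(D_1 + a D_2)$ with $D_1, D_2 \in \mathrm{ZH}_{\pi/2^{k-1}}$; composing this with the inductive hypothesis gives the result.

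The main subtlety I would want to check is that the recursion does not blow up. Since $k$ is a fixed parameter of the problem and not part of the input, fully unrolling Lemma~\ref{completenesszhqicomplete} produces at most $2^k$ calls to the base-case evaluator, which is a constant number for fixed $k$. I would verify that each rewriting step in that lemma --- the phase extraction, the folding of $a$-labelled H-boxes via Lemma~\ref{completenesszhqicopy}, and the final $a$-box split --- adds only a bounded number of generators per spider with an odd-multiple phase, so the diagram size remains polynomial throughout. With these pieces in hand, the overall procedure is a polynomial-time algorithm issuing polynomially many \sSAT oracle queries, i.e.\ a member of $\textbf{FP}^\sP$, completing the proof.
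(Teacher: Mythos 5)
Your proposal is correct and follows essentially the same route as the paper: membership by induction on $k$ using Lemma~\ref{completenesszhqicomplete} with Theorem~\ref{completenesszhzcomplete} as the base case, and hardness because the \sSAT diagrams already lie in $\mathrm{ZH}_{\pi/2^k}$. Your extra check that unrolling the induction yields only $O(2^k)$ oracle calls --- a constant since $k$ is fixed --- is a detail the paper leaves implicit, not a different argument.
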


    \begin{proof}
       By induction on Lemma \ref{completenesszhqicomplete} with the base case $k = 0$ given by Theorem \ref{completenesszhzcomplete}, we have that $\mathrm{Eval\mhyphen}\mathrm{ZH}_{\pi/2^k}$ is in $\textbf{FP}^\sP$. It is clearly $\textbf{FP}^\sP$-hard, as $\mathrm{ZH}_{\pi/2^k}$ contains the diagrams representing \sSAT instances, and hence is $\textbf{FP}^\sP$-complete.
    \end{proof}

    The fragment $\mathrm{ZH}_{\pi/2^2}$ captures precisely those diagrams that represent postselected Clifford+T quantum circuits. Our results above hence also lead to a proof that $\textbf{PostBQP} \subseteq \textbf{P}^{\sP}$, although note that this is weaker than Aaronson's result that $\textbf{PostBQP} = \textbf{PP}$~\cite{aaronson2005quantum}.

    % \begin{corollary}
    %     $\textbf{PostBQP} \subseteq \textbf{P}^{\sP}$.
    % \end{corollary}

    % \begin{proof}
    %     Note that approximately sampling from a postselected quantum circuit is complete for $\textbf{PostBQP}$ \cite{aaronson2005quantum}.
    %     By the Solovay-Kitaev theorem, any quantum circuit may be approximated by a Clifford+T circuit with poly-logarithmic overhead in the precision. Hence, an arbitrary postselected quantum circuit can be efficiently approximated arbitrarily well by a postselected Clifford+T circuit. Marginal probabilities from such a postselected circuit can be straightforwardly represented in the $\mathrm{ZH}_{\pi/2^2}$ fragment (by using the `doubling trick', see for instance~\cite{kissinger2021simulating}). Using a standard reduction of weak simulation (sampling) to strong simulation (calculation of marginal probabilities), we see that an oracle for $\mathrm{Eval\mhyphen}\mathrm{ZH}_{\pi/2^2}$ allows us then to sample from a postselected circuit. As $\mathrm{Eval\mhyphen}\mathrm{ZH}_{\pi/2^2}$ is in $\textbf{FP}^{\sP}$, this gives $\textbf{PostBQP} \subseteq \textbf{P}^{\sP}$.
    % \end{proof}

    % Note that this statement is weaker than the well-known result by Aaronson that $\textbf{PostBQP} = \textbf{PP}$~\cite{aaronson2005quantum}. We further note that our result bears a resemblance to a technique of Montanaro that connects the evaluation of quantum circuits comprised of CZ and CCZ gates to \sSAT \cite{montanaro_quantum_2017}.

\section{Conclusion}\label{sec:conclusion}

In this paper we have used the ZH-calculus to simplify and unify the proofs of several known results in counting complexity. In particular, we examined various variants of \sSAT and show that they are \sP-complete, and similarly that the corresponding $\sSATk{M}$ variants are $\sPk{M}$-complete. We for instance produced a simple direct reduction from \sSAT to \stwoSAT, which considerably simplified existing proofs that proceed via a reduction to the matrix permanent. Our results show that graphical calculi like the ZH-calculus, even though originally meant for the domain of quantum computing, can provide an intuitive framework for working with counting problems, through their interpretation as tensor networks. 

We also briefly examined how other graphical calculi can be used to reason about other counting problems, especially the ZW-calculus and its connection to counting perfect matchings in graphs. A natural future direction is to explore which counting problems can be naturally formulated in a graphical calculus. Finally, we also observed how the original domain of quantum computing can be related to \sSAT via the ZH-calculus, and show that the computational problem of evaluating scalar ZH-diagrams that represent postselected Clifford+T or Toffoli+Hadamard quantum circuits is in $\textbf{FP}^\sP$, and hence can be efficiently evaluated with an \sSAT oracle - evaluating whether this leads to a more efficient method for simulating quantum circuits is an interesting avenue for future research.

\subsection*{Acknowledgments}

Some of this work was done while TL was a student at the University of Oxford, and the
results in Section \ref{sec:evalzh} are also presented in his Master's thesis \cite{laakkonen_graphical_2022}. We thank Richie Yeung, Matty Hoban, Julien Codsi, and the anonymous QPL reviewers for helpful feedback.

\bibliographystyle{eptcs}
\bibliography{bibliography}

    \appendix

    \section{Rewriting Rules}
    \label{sec:zhrules}

    The ZH-calculus is equipped with the following set of sound and complete rewriting rules \cite{backens_zh_2019}:
    \begin{equation*}
        \tikzfig{qpl-zhcalculusrules}
    \end{equation*}
    We will also use the following derived rewriting rules \cite[Lemmas 2.10-2.24]{backens_completeness_2021},
    \begin{equation}
        \tikzfig{qpl-zhcalculusrules-extra}
    \end{equation}
    as well as the generalized $(M)$ rule \cite[Lemma 2.3]{backens_zh_2019}:
    \begin{equation}
        \tikzfig{qpl-zhcalculusrules-extra2}
    \end{equation}

    \section{$\sSAT \to \sSATvar{MON-BI-PL-3DEG}$}
    \label{sec:combinedsat}

    The smallest subset of \stwoSAT that has been considered in the literature is \stwoSATvar{PL-MON-BI-CUBIC}, where \textbf{CUBIC-} indicates that the primal graph of the instance is 3-regular \cite{xia_3-regular_2006}. In this section, we show that \sSATvar{3DEG} is \sP-complete graphically, by relating the zero-labeled H-box with the Fibonacci numbers. Here, \textbf{3DEG-} indicates that every variable appears in at most three clauses. Then in Theorem \ref{combined2sat} we combine all of our reductions to show \sP-completeness for \stwoSATvar{PL-MON-BI-3DEG}. This is slightly less restrictive than \stwoSATvar{PL-MON-BI-CUBIC}, but retains the interesting property that the maximum degree is the lowest possible (if the maximum degree was two, then the problem can be solved in polynomial time \cite{dahllof_counting_2002}), while avoiding the complicated global construction of the original proof.

    We will make use of the following identities concerning the Fibonacci numbers, defined by:
    \begin{equation}
        F_{n} = F_{n - 1} + F_{n - 2} ~~~~ F_{1} = 1 ~~~~ F_{0} = 0
    \end{equation}

    \begin{lemma}
        \label{fibcoprime}
        $\gcd(F_{n}, F_{n - 1}) = 1$ for all $n > 1$.
    \end{lemma}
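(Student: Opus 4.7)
The plan is to prove this by a simple induction on $n$, using the Fibonacci recurrence to reduce $\gcd(F_{n+1}, F_n)$ to $\gcd(F_n, F_{n-1})$ and apply a Euclidean-algorithm style argument. The base case $n = 2$ is immediate since $F_2 = 1$ and $F_1 = 1$, giving $\gcd(F_2, F_1) = 1$.

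For the inductive step, I would assume $\gcd(F_n, F_{n-1}) = 1$ and show $\gcd(F_{n+1}, F_n) = 1$. The key observation is that $F_{n+1} = F_n + F_{n-1}$, so any common divisor $d$ of $F_{n+1}$ and $F_n$ must also divide $F_{n+1} - F_n = F_{n-1}$. Hence $d$ is a common divisor of $F_n$ and $F_{n-1}$, and by the inductive hypothesis $d = 1$. Equivalently, one can phrase this as $\gcd(F_{n+1}, F_n) = \gcd(F_n + F_{n-1}, F_n) = \gcd(F_{n-1}, F_n)$, which is a direct instance of the standard identity $\gcd(a + b, a) = \gcd(b, a)$.

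There is essentially no obstacle here — the result follows in a few lines from the recurrence, and no deeper properties of Fibonacci numbers are needed. The only care required is to start the induction at $n = 2$ (since the statement is restricted to $n > 1$) so that both $F_n$ and $F_{n-1}$ are well-defined positive integers.
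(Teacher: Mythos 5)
Your proof is correct and follows essentially the same route as the paper: both use the recurrence $F_{n+1} = F_n + F_{n-1}$ to reduce $\gcd(F_{n+1}, F_n)$ to $\gcd(F_n, F_{n-1})$ in Euclidean-algorithm style and then induct down to the base case $\gcd(F_2, F_1) = 1$. No issues.
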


    \begin{proof}
        $\gcd(F_{n}, F_{n - 1}) = \gcd(F_{n - 1} + F_{n - 2}, F_{n - 1}) = \gcd(F_{n - 2}, F_{n - 1})$ thus by induction $\gcd(F_{n}, F_{n - 1}) = \gcd(F_2, F_1) = 1$.
    \end{proof}

    \begin{lemma}[{\cite{halton_divisibility_1966}}]
        \label{fibexists}
        For every $M \geq 1$ there exists some $0 < n \leq M^2$ such that $F_n \equiv 0 \mod M$.
    \end{lemma}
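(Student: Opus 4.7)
The plan is to apply the pigeonhole principle to consecutive pairs of Fibonacci numbers modulo $M$, and then to use the reversibility of the Fibonacci recurrence to shift back to the initial pair.

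First, I would form the sequence of pairs $P_n := (F_n \bmod M,\ F_{n+1} \bmod M)$ for $n = 0, 1, 2, \ldots, M^2$. This is a list of $M^2 + 1$ elements of $(\mathbb{Z}/M\mathbb{Z})^2$, a set of size $M^2$, so by pigeonhole there exist indices $0 \le i < j \le M^2$ with $P_i \equiv P_j \pmod{M}$, that is, $F_i \equiv F_j$ and $F_{i+1} \equiv F_{j+1} \pmod{M}$.

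Next, I would exploit the fact that the Fibonacci recurrence is invertible in either direction: from $F_{n+1} = F_n + F_{n-1}$ one recovers $F_{n-1} = F_{n+1} - F_n$. Consequently, $P_n \bmod M$ determines $P_{n-1} \bmod M$. Iterating this backward step $i$ times starting from $P_i \equiv P_j \pmod M$ yields $P_0 \equiv P_{j-i} \pmod{M}$, i.e.\ $F_0 \equiv F_{j-i} \pmod M$. Since $F_0 = 0$, this gives $F_{j-i} \equiv 0 \pmod{M}$. Setting $n := j - i$, we have $0 < n \le M^2$, which is the desired conclusion.

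There is no real obstacle here; the only mildly delicate point is to be careful that the backward-iteration argument is valid even when the initial pair is $(0,1)$ (so that one does not inadvertently reduce to a trivial statement for small $M$). For $M = 1$ the claim holds trivially with $n = 1$, and for $M \ge 2$ the pigeonhole argument above applies without modification, since $j - i \ge 1$ gives a genuine positive index $n \le M^2$ with $M \mid F_n$.
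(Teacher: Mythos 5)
Your argument is correct. Note that the paper does not prove this lemma at all --- it simply cites it from Halton's 1966 paper on divisibility properties of Fibonacci numbers --- so there is no in-paper proof to compare against; your write-up supplies a self-contained elementary justification. The route you take (pigeonhole on the $M^2+1$ consecutive pairs $(F_n,F_{n+1}) \bmod M$ for $0 \le n \le M^2$, then using the invertibility of the recurrence, $F_{n-1}=F_{n+1}-F_n$, to slide the repeated pair back to $P_0=(0,1)$) is the standard proof that the Fibonacci sequence is purely periodic modulo $M$, and it delivers exactly the claimed bound: the repeated indices satisfy $0 \le i < j \le M^2$, so $n := j-i$ obeys $0 < n \le M^2$ and $F_n \equiv F_0 = 0 \pmod{M}$. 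The caveat you raise about the initial pair is not actually an issue, since the backward step is a well-defined map on $(\mathbb{Z}/M\mathbb{Z})^2$ regardless of the values involved; if you wanted, you could even sharpen the count slightly by observing that $(0,0)$ never occurs (consecutive Fibonacci numbers are coprime, cf.\ Lemma~\ref{fibcoprime}), but that is not needed for the stated bound.
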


    \begin{lemma}[{\cite{halton_divisibility_1966}}]
        \label{fibclosedform}
        $F_n = \left\lfloor \frac{\phi^n}{\sqrt{5}} + \frac{1}{2}\right\rfloor$ where $\phi = \frac{1 + \sqrt{5}}{2}$ is the golden ratio.
    \end{lemma}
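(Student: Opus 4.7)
The plan is to establish Binet's closed-form expression for $F_n$ and then observe that the second term is small enough that rounding absorbs it. First I would verify Binet's formula: $F_n = (\phi^n - \psi^n)/\sqrt{5}$, where $\psi = (1 - \sqrt{5})/2$. This is a routine induction: both $\phi$ and $\psi$ satisfy $x^2 = x + 1$ (they are the two roots of the characteristic polynomial of the Fibonacci recurrence), so any linear combination $A\phi^n + B\psi^n$ satisfies $a_n = a_{n-1} + a_{n-2}$. Choosing $A = 1/\sqrt{5}$, $B = -1/\sqrt{5}$ matches the initial conditions $F_0 = 0$ and $F_1 = 1$, giving Binet's formula for all $n \geq 0$.

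Next I would bound the $\psi^n$ term. Since $\psi = (1 - \sqrt{5})/2$ satisfies $|\psi| = (\sqrt{5} - 1)/2 < 1$, we have $|\psi^n| \leq |\psi| < 1$ for all $n \geq 1$, and in particular
\begin{equation}
    \left| \frac{\psi^n}{\sqrt{5}} \right| \leq \frac{1}{\sqrt{5}} < \frac{1}{2}.
\end{equation}
Combining this with Binet's formula, $\left| F_n - \phi^n/\sqrt{5} \right| < 1/2$ for all $n \geq 1$, so $F_n$ is the unique integer closest to $\phi^n/\sqrt{5}$. The nearest-integer function can be written as $\lfloor x + 1/2 \rfloor$ (with the standard tie-breaking convention, which is not triggered here since $\phi^n/\sqrt{5}$ is irrational for $n \geq 1$), yielding the claimed identity. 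The base case $n = 0$ can be checked directly: $\phi^0/\sqrt{5} + 1/2 = 1/\sqrt{5} + 1/2 < 1$, so its floor is $0 = F_0$.

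The main (minor) obstacle is just being careful with sign cases: for even $n$ the $\psi^n$ term is positive so $\phi^n/\sqrt{5}$ slightly overshoots $F_n$, while for odd $n$ it undershoots. In both cases the deviation is bounded by $1/\sqrt{5} < 1/2$, so adding $1/2$ and taking the floor recovers $F_n$ regardless of sign. Since the bound is strict and $\phi^n/\sqrt{5}$ is never a half-integer, no tie-breaking ambiguity arises.
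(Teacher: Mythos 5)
Your proposal is correct. Note that the paper does not actually supply a proof of this lemma: it is quoted with a citation to Halton's 1966 paper on divisibility of Fibonacci numbers, since only the statement is needed (to pick $k$ with $F_k \geq 2^n+1$ in Theorem \ref{sattocubicsattheorem}). Your argument is the standard one and is complete: Binet's formula $F_n = (\phi^n - \psi^n)/\sqrt{5}$ by induction on the recurrence, the bound $\left|\psi^n/\sqrt{5}\right| \leq 1/\sqrt{5} < 1/2$ for $n \geq 1$, hence $F_n < \phi^n/\sqrt{5} + 1/2 < F_n + 1$ strictly (so no tie-breaking issue even arises, irrationality aside), and the direct check of $n = 0$. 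This fills in the cited result correctly; nothing is missing.
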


    \begin{lemma}
        \label{sattocubicsatident}
        Suppose that $F_n \equiv 0 \mod M$, then the following rewrite holds:
        \begin{equation}
            \tikzfig{reductions/sat-to-cubicsat-ident}
        \end{equation}
    \end{lemma}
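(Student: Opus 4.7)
The plan is to interpret both sides of the asserted rewrite as tensors and reduce the identity to a statement about powers of the \emph{Fibonacci transfer matrix} modulo $M$. Since the diagram is not visible to me here, I am inferring from the surrounding context (Fibonacci coprimality, the goal $\sSAT \to \sSATvar{3DEG}$, and the availability of $2$-legged zero-H-boxes in the $\sSAT$ encoding) that the LHS is a chain of $n$ two-legged zero-H-boxes alternating with two-legged Z-spiders connecting a pair of boundary wires, and the RHS is a simpler short connection (e.g.\ an identity wire between two Z-spiders) carrying an appropriate scalar, with the equality holding only modulo $M$.

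The first step is to read off the tensor of a single two-legged zero-H-box: from the H-box semantics it is $H=\bigl(\begin{smallmatrix}1&1\\1&0\end{smallmatrix}\bigr)$, precisely the Fibonacci matrix, satisfying $H^n = \bigl(\begin{smallmatrix}F_{n+1}&F_n\\F_n&F_{n-1}\end{smallmatrix}\bigr)$. A two-legged Z-spider acts as the identity, so the chain on the LHS computes exactly $H^n$ as a map $\mathbb{C}^2\to\mathbb{C}^2$. Next, I would apply the hypothesis $F_n\equiv 0\bmod M$: the off-diagonal entries vanish mod $M$, and the recurrence $F_{n+1}=F_n+F_{n-1}$ forces the two diagonal entries to agree mod $M$, giving $H^n\equiv F_{n-1}\cdot I\bmod M$. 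By Lemma~\ref{fibcoprime} the scalar $F_{n-1}$ is coprime to $M$, so it is a unit modulo $M$; depending on whether the identity in the figure is stated with the $F_{n-1}$ factor already on the RHS or is instead phrased in ``normalised'' form, I would either simply read off the scalar or multiply through by the inverse $F_{n-1}^{-1}\bmod M$ to clear it.

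The final step is to re-express the tensor equation $H^n\equiv F_{n-1}I\bmod M$ back as a ZH-diagram equation, using that an identity map between two boundary wires is a single wire (or equivalently a two-legged Z-spider by \IdRule), and attaching the boundary endpoints as they appear in the statement. This will recover the pictured rewrite exactly, modulo $M$.

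The main obstacle I anticipate is not the algebra but the bookkeeping: confirming the precise boundary structure of the diagram (so that the chain really does compose as $H^n$ rather than as $H^n$ with some extra spiders or normalisations), and tracking any scalar prefactors that the ZH-calculus introduces when unfusing or reinterpreting Z-spiders along the chain. Once the chain is identified as a pure matrix product, the Fibonacci identity does essentially all the work, with Lemma~\ref{fibcoprime} guaranteeing that the scalar $F_{n-1}$ does not collapse the rewrite to a degenerate congruence.
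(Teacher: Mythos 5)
Your proposal matches the paper's own proof essentially verbatim: the paper also identifies the two-legged zero-labeled H-box with the Fibonacci matrix $\bigl(\begin{smallmatrix}1&1\\1&0\end{smallmatrix}\bigr)$, uses the power formula $\bigl(\begin{smallmatrix}1&1\\1&0\end{smallmatrix}\bigr)^n=\bigl(\begin{smallmatrix}F_{n+1}&F_n\\F_n&F_{n-1}\end{smallmatrix}\bigr)$ to reduce the chain to $F_{n-1}\cdot I$ modulo $M$ when $F_n\equiv 0\bmod M$, and invokes Lemma~\ref{fibcoprime} to conclude that the scalar $F_{n-1}$ is invertible modulo $M$. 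Your reconstruction of the diagram's content and the handling of the scalar factor are consistent with the paper's argument, so the proof is correct and takes the same route.
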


    \begin{proof}
        Note that the Fibonacci numbers are defined by
        \begin{equation}
            \begin{pmatrix}
                F_{n + 1} \\ F_{n}
            \end{pmatrix} = \begin{pmatrix}
                1 & 1 \\ 1 & 0
            \end{pmatrix}\begin{pmatrix}
                F_{n} \\ F_{n - 1}
            \end{pmatrix} \implies \begin{pmatrix}
                1 & 1 \\ 1 & 0
            \end{pmatrix}^n = \begin{pmatrix}
                F_{n + 1} & F_{n} \\ F_{n} & F_{n - 1}
            \end{pmatrix}
        \end{equation}
        and so, supposing that $F_n \equiv 0 \mod M$, we have
        \begin{equation}
            \tikzfig{reductions/sat-to-cubicsat-ident-proof}
        \end{equation}
        but by Lemma \ref{fibcoprime}, $F_{n - 1}$ is coprime to $F_n$, so it must be coprime to $M$ and thus invertible.
    \end{proof}

    \begin{lemma}
        \label{sattocubicsatlemma}
        Given $M$ and $k$ such that $F_k \equiv 0 \mod M$, for any $\phi \in \sSATk{M}$ with $n$ variables, $m$ clauses, and maximum clause size at least two, there is a $\phi' \in \sSATk{M}$ with $O(nmk)$ variables and $O(nmk)$ clauses such that every variable has degree at most three, and $\scountk{M}{\phi} = c \cdot \scountk{M}{\phi'} \mod M$ where $c$ is computable in $O(\mathrm{poly}(n, m, k))$ time. Additionally, $\phi'$ can be computed in $O(\mathrm{poly}(n, m, k))$ time and preserves the maximum clause size of $\phi$.
    \end{lemma}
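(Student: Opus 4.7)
The plan is to use Lemma~\ref{sattocubicsatident} to reduce the degree of every Z-spider in the ZH-diagram of $\phi$ to at most three, by splitting each high-degree Z-spider into a chain of 3-legged spiders and inserting a Fibonacci gadget on every new internal wire. Concretely, for each variable Z-spider of degree $d_v > 3$, I would first use spider fusion in reverse to decompose it into a chain of $d_v - 2$ three-legged Z-spiders connected by $d_v - 3$ internal wires: the two endpoints carry two external legs each, every interior spider carries one external leg, and all X-spiders representing negations remain attached to the original (now external) legs. I would then apply Lemma~\ref{sattocubicsatident} in reverse to each internal wire, replacing it with a chain of $k$ zero-labelled two-legged H-boxes interleaved with $k - 1$ two-legged Z-spiders, collecting a scalar factor of $F_{k-1}^{-1} \bmod M$ each time.

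After these rewrites, every Z-spider has degree either three (the chain spiders) or two (the interleaved spiders inside gadgets), so every variable of $\phi'$ participates in at most three clauses. The H-boxes are either the unchanged original clauses of $\phi$ or new two-legged zero H-boxes, so the maximum clause size is preserved (using the hypothesis that it is at least two). The resulting diagram is still in SAT normal form because the interleaved Z-spiders sit between successive H-boxes in each gadget, and so each H-box leg connects to a Z-spider (possibly through an X-spider). In the SAT embedding, the interleaved two-legged Z-spiders count as genuine new degree-two variables, as the tensor interpretation sums over each Z-spider independently.

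For the counts: each variable $v$ of degree $d_v$ contributes $d_v - 3$ gadgets and $d_v - 3$ additional chain spiders, with each gadget introducing $k$ new H-boxes and $k - 1$ new Z-spiders, giving $O(d_v k)$ new variables and clauses. Summing and using $\sum_v d_v = O(nm)$ (the total number of variable--clause incidences) yields $O(nmk)$ new variables and clauses overall. The scalar is $c \equiv F_{k-1}^{-r} \bmod M$ where $r = \sum_v (d_v - 3) = O(nm)$ is the total number of gadget insertions; one computes $F_{k-1} \bmod M$ by repeated squaring of the Fibonacci companion matrix, inverts modulo $M$ by extended Euclid, and then exponentiates---all in $\mathrm{poly}(n,m,k)$ time.

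The main subtlety is that Lemma~\ref{sattocubicsatident} is only valid modulo $M$ and requires inverting $F_{k-1}$: this is exactly where Lemma~\ref{fibcoprime} is needed, since $\gcd(F_k, F_{k-1}) = 1$ together with $M \mid F_k$ gives $\gcd(F_{k-1}, M) = 1$, so the inverse exists. Beyond this bookkeeping, the rest of the argument is a straightforward diagrammatic rewrite that never interacts with the X-spiders on the external legs, so polarity information is preserved.
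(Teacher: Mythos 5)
Your proof is correct and essentially the paper's own argument: unfuse each high-degree variable into a chain of three-legged Z-spiders and replace each internal wire by the Fibonacci H-box chain of Lemma~\ref{sattocubicsatident}, giving degree at most three, only size-two new clauses, $O(nmk)$ new variables and clauses, and a scalar $c$ that is a power of $F_{k-1}^{-1} \bmod M$ (invertible by Lemma~\ref{fibcoprime}). The only difference is that the paper inserts chains of length $2k$ (two applications of the lemma per internal wire, so $c = \prod_i F_{k-1}^{6-2d_i}$) rather than length $k$, solely so that the rewrite also preserves bipartite structure --- a property not claimed in this lemma's statement but relied on later in Theorem~\ref{combined2sat}.
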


    \begin{proof}
        For every variable in $\phi$ with degree more than three, we can apply the following rewrite by Lemma \ref{sattocubicsatident}:
        \begin{equation}
            \tikzfig{reductions/sat-to-cubicsat}
        \end{equation}
        since $d \leq m$, this adds at most $O(nmk)$ clauses and variables. Note that we took H-box sequences of length $2k$ (i.e two applications of Lemma \ref{sattocubicsatident}) in order to preserve any bipartite structure in $\phi$. Thus $\phi$ now has degree at most three. Since these rewrites only add clauses of size two, they preserve the maximum clause size. The total scalar factor accumulated across all of these rewrites is
        \begin{equation}
            c = \prod_{i=1}^n \begin{cases} F_{k - 1}^{6 - 2d_i} & d_i > 3 \\ 1 & d_i \leq 3 \end{cases}
        \end{equation}
        where $d_i$ is the degree of variable $i$, which is clearly computable in polynomial time.
    \end{proof}

    \begin{theorem}
        \label{sattocubicsattheorem}
        We have the following:
        \begin{enumerate}
            \item $\skSATvark{M}{3DEG}$ and $\sSATvark{M}{3DEG}$ are $\sPk{M}$-complete for any $M \geq 1$ and $k \geq 2$. 
            \item $\skSATvar{3DEG}$ and $\sSATvar{3DEG}$ are $\sP$-complete for any $k \geq 2$. 
        \end{enumerate}
    \end{theorem}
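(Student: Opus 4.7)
The plan is to derive both parts directly from Lemma \ref{sattocubicsatlemma}; the remaining work is just to pick the right parameters. To avoid a notational clash, let $\ell$ denote the Fibonacci index (called $k$ in that lemma), reserving $k$ for the bound on clause size as in the theorem statement.

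For part 1, the modulus $M$ is fixed. Lemma \ref{fibexists} supplies some $\ell \leq M^2 = O(1)$ with $F_\ell \equiv 0 \pmod{M}$. Applying Lemma \ref{sattocubicsatlemma} with this $\ell$ produces, in polynomial time, an instance $\phi' \in \sSATvark{M}{3DEG}$ of polynomial size and an efficiently computable scalar $c$ satisfying $\scountk{M}{\phi} \equiv c \cdot \scountk{M}{\phi'} \pmod{M}$. Querying the oracle on $\phi'$ and multiplying the answer by $c$ mod $M$ then yields $\scountk{M}{\phi}$, giving the desired polynomial-time Cook reduction. Because Lemma \ref{sattocubicsatlemma} preserves the maximum clause size, the same argument works for $\skSATvark{M}{3DEG}$.

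For part 2, the strategy mirrors the proofs of Theorem \ref{satto2sat}(2) and Theorem \ref{sattomonosattheorem}(2): choose $M$ large enough that the exact value $\scount{\phi}$ is recoverable from its residue. Given $\phi \in \sSAT$ with $n$ variables, we have $\scount{\phi} \leq 2^n$. By Lemma \ref{fibclosedform}, $F_\ell$ grows exponentially in $\ell$, so a linear $\ell$ (e.g.\ $\ell = 2n$) suffices to force $F_\ell > 2^n$. Setting $M := F_\ell$ makes the divisibility requirement $F_\ell \equiv 0 \pmod{M}$ trivially true, and Lemma \ref{sattocubicsatlemma} then produces an instance $\phi'$ of $\sSATvar{3DEG}$ and a scalar $c$ in polynomial time. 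Calling the $\sSATvar{3DEG}$ oracle on $\phi'$ returns $\scount{\phi'}$, and we read off $\scount{\phi} = (c \cdot \scount{\phi'}) \bmod M$ exactly, since $\scount{\phi} < M$. Preservation of maximum clause size gives $\skSATvar{3DEG}$ at the same time.

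The main obstacle is recognising that part 2 should \emph{not} try to solve the Pisano-period question---for natural choices like $M = 2^n + 1$, the smallest $\ell$ with $F_\ell \equiv 0 \pmod{M}$ is generically exponential in $n$, and feeding such an $\ell$ into Lemma \ref{sattocubicsatlemma} would wreck the running time. The escape is to invert the dependency: pick $\ell$ first, then define $M := F_\ell$ so that the divisibility is free. Beyond this observation everything is routine bookkeeping---the scalar $c$ is a product of $O(n)$ powers of $F_{\ell - 1}$ with $\ell = O(n)$, hence has polynomial bit length and is easily computed and multiplied.
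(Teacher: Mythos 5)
Your proposal is correct and takes essentially the same approach as the paper: for part 1, use Lemma \ref{fibexists} to find a constant-size Fibonacci index (since $M$ is fixed); for part 2, invert the dependency by choosing the Fibonacci index first, linearly in $n$, and setting $M := F_\ell$ so the divisibility condition is trivial and $M > 2^n$ lets you recover $\scount{\phi}$ exactly. The paper does precisely this, picking the exact index $k = \bigl\lceil \log_\phi\bigl((2^n+1)\sqrt{5} - \tfrac{1}{2}\bigr)\bigr\rceil = O(n)$ via Lemma \ref{fibclosedform}. One small nit: your illustrative choice $\ell = 2n$ actually fails for small $n$ (e.g.\ $F_6 = 8 \not> 2^3$), so you would need a small additive constant or the paper's explicit formula, but this does not affect the argument.
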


    \begin{proof}
        \hfill
        \begin{enumerate}
            \item By Lemma \ref{fibexists}, there exists some $k$ such that $F_k \equiv 0 \mod M$ and it is easy to compute. Therefore, this follows from Lemma \ref{sattocubicsatlemma} since the maximum clause size is preserved.
            \item Let $\psi \in \sSAT$ have $n$ variables, and $\phi$ be the golden ratio. Pick $k = \left\lceil\log_\phi\left((2^n + 1)\sqrt{5} - \frac{1}{2}\right)\right\rceil = O(n)$, then $F_k \geq 2^n + 1$ by Lemma \ref{fibclosedform} and $\scount{\phi} = \scountk{F_k}{\phi}$. Apply Lemma \ref{sattocubicsatlemma} with $M = F_k$ to generate $\phi' \in \sSATvar{3DEG}$ such that $\scount{\phi} = c\cdot \scountk{F_k}{\phi'} \mod F_k = c \cdot \scount{\phi'} \mod F_k$ for some polynomial-time computable $c$. The same argument applies for bounded clause size. \qedhere
        \end{enumerate}
    \end{proof}

    \begin{lemma}
        \label{2sattobi2satlemma}
        Given $r \geq 0$, for any $\phi \in \stwoSATk{2^r + 1}$ with $n$ variables and $m$ clauses, there is a bipartite $\phi' \in \stwoSATk{2^r + 1}$ with $O(nmr)$ variables and $O(nmr)$ clauses such that $\scountk{2^r + 1}{\phi} = \scountk{2^r + 1}{\phi'}$, which can be computed in $O(\mathrm{poly}(n, m, r))$ time, and that preserves monotonicity and planarity.
    \end{lemma}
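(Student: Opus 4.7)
The plan is to reuse the rewriting of Lemma \ref{satto2sat2kp1} applied clause-by-clause to the already-2SAT instance $\phi$. Even though $\phi$ is already in 2-clause form, its primal graph may contain odd cycles; the chain gadget of Lemma \ref{satto2sat2kp1} replaces each direct variable--variable edge in the primal graph with a longer path through $O(r)$ fresh intermediate variables, breaking all such odd cycles in a uniform way and producing an instance whose primal graph admits a natural bipartition.

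First, I would apply the identity underlying Lemma \ref{satto2sat2kp1} to every 2-ary clause H-box in $\phi$. Each application introduces $O(r)$ new 2-ary H-boxes and $O(r)$ new Z-spiders, and by soundness of the ZH-calculus modulo $2^r + 1$ we get $\scountk{2^r + 1}{\phi} = \scountk{2^r + 1}{\phi'}$; summed over the $m$ clauses this gives $O(n + mr) \subseteq O(nmr)$ variables and $O(mr) \subseteq O(nmr)$ clauses, all computable in $O(\mathrm{poly}(n, m, r))$ time. Second, I would verify bipartiteness using exactly the argument of the corollary after Theorem \ref{satto2sat}: the vertex set of the new primal graph decomposes as $V \cup C_1 \cup \dots \cup C_m$, where each gadget $C_i$ is a chain and hence admits a bipartition $C_i = C_i^A \cup C_i^B$, and each $v \in V$ touches only vertices of the $C_i^A$ side, so $V \cup \bigcup_i C_i^B$ versus $\bigcup_i C_i^A$ is a valid global bipartition. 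Third, I would confirm that the rewrite introduces no new X-spiders (so monotonicity is preserved), and that each gadget is a small local substitution placed near its clause (so a planar embedding of $\phi$ extends to one of $\phi'$).

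The main obstacle will be verifying the parity of the chain gadget: for both boundary vertices of a gadget (the ones attached to the two original variables of the clause) to end up on the same side of its internal bipartition, the chain length must have the appropriate parity. If the native chain from Lemma \ref{satto2sat2kp1} has the wrong parity, I would concatenate two copies of the gadget---or equivalently insert a single extra 2-ary H-box and Z-spider at one end---so the two boundary vertices land equidistant from a common bipartition class. This fix costs only a constant factor and keeps the size bound at $O(nmr)$, while preserving both monotonicity and planarity because the extra insertion is again a local all-Z/H substitution.
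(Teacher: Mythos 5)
Your high-level approach matches the paper's: rewrite each 2-ary H-box as a chain of 2-ary H-boxes and Z-spiders (the identity underlying Lemma~\ref{satto2sat2kp1}), then argue that the resulting primal graph is bipartite because every original variable--variable edge becomes a path and the bipartitions of the individual chains can be aligned. Your explicit bipartition (original variables plus the ``inner'' sides of the chains against the ``outer'' sides) is equivalent to the paper's argument, which phrases it as: every edge is subdivided into an even number of segments, so all odd cycles become even. The observation that the gadget introduces no X-spiders (monotonicity) and is a local substitution (planarity) is also correct and is exactly what the paper relies on.

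The place where your proposal falls short is the parity fix. You are right that the chain's endpoints must land in the same class of the chain's bipartition, which requires the replacement path in the primal graph to have even length. But your two proposed fixes are not interchangeable and, more importantly, neither is sound as stated: concatenating two copies of a length-$\ell$ chain yields a map $A^2$ rather than $A$, and inserting an extra $2$-ary H-box and Z-spider inserts an additional factor of $\begin{pmatrix}1 & 1 \\ 1 & 2\end{pmatrix}$, so both change the tensor and hence the count. Whatever you do to adjust the parity must preserve the identity $\scountk{2^r+1}{\phi}=\scountk{2^r+1}{\phi'}$ exactly, with no stray scalar. The paper sidesteps this by deriving, via a short sequence of rewrites whose last step invokes the proof of Lemma~\ref{satto2sat2kp1}, a dedicated 2-clause gadget with the correct parity built in (each incidence edge becomes a path of even length), rather than re-using the native chain verbatim and patching it afterwards. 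To make your argument complete you would need to either show that the native chain already has the right parity, or exhibit a parity-adjusting insertion whose tensor is the identity modulo $2^r+1$ (or whose scalar cancels), instead of asserting that a generic extra H-box does the job.
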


    \begin{proof}
        Note that we have the following rewriting rule
        \begin{equation}
            \tikzfig{reductions/2sat-to-bi2sat-lemma}
        \end{equation}
        where the last line follows from the proof of Lemma \ref{satto2sat2kp1}. After this is applied to every clause of size two, every path between two vertices in the incidence graph will have even length (since every edge is replaced by four edges), and hence the incidence graph is bipartite.
    \end{proof}

    \begin{theorem}
        \label{combined2sat}
        We have the following:
        \begin{enumerate}
            \item $\stwoSATvar{MON-BI-PL-3DEG}$ is $\sP$-complete
            \item $\ptwoSATvar{MON\mhyphen BI\mhyphen PL\mhyphen 3DEG}$ is $\pP$-complete
        \end{enumerate}
    \end{theorem}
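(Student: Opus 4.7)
The plan is to chain the preceding reductions in the order: planarize (Theorem \ref{sattoplsatthm}), reduce to $\stwoSAT$ (Theorem \ref{satto2sat}), monotonize (Theorem \ref{sattomonosattheorem}), make bipartite (Lemma \ref{2sattobi2satlemma}), and finally restrict to maximum degree three (Theorem \ref{sattocubicsattheorem}). Inspection of Table \ref{tbl: reductions}, together with the preservation clause of Lemma \ref{2sattobi2satlemma}, shows that every structural property added along the way is preserved by all subsequent steps: planarity survives everything after it, clause-size-two survives the last three steps, monotonicity survives the last two, and bipartiteness survives the final step. Membership in $\sP$ (respectively $\pP$) is immediate from Theorem \ref{satto2sat}, since the restricted variant is a special case of $\stwoSAT$, so the work is in showing hardness.

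For part 2, I would observe that every sub-reduction composes cleanly modulo $2$: take $r = 0$ in Lemma \ref{satto2sat2kp1}, $r = 1$ in Lemma \ref{sattomonosat}, $r = 0$ in Lemma \ref{2sattobi2satlemma}, and $M = 2$ with $k = 3$ in Lemma \ref{sattocubicsatlemma} (valid because $F_3 = 2$, and because the scalar $c$ there is a power of $F_2 = 1$, and hence equal to $1$ modulo $2$). Every step preserves $\scountk{2}{\cdot}$ exactly, yielding a parsimonious-style reduction from $\pSAT$ to $\ptwoSATvar{MON\mhyphen BI\mhyphen PL\mhyphen 3DEG}$.

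The main obstacle is part 1, because Lemma \ref{sattomonosat} works modulo $2^r$ while Lemmas \ref{satto2sat2kp1} and \ref{2sattobi2satlemma} work modulo $2^r + 1$, so no single modulus threads through all five steps. My plan is a polynomial-time Cook reduction in which each stage chooses a fresh modulus just large enough to preserve the current count exactly: at stage $i$ with $n_i$ variables, pick the parameter so that the modulus strictly exceeds $2^{n_i}$, forcing $\scount{\phi_i} = \scount{\phi_{i+1}} \bmod M_i$. For the final 3DEG step I would follow the argument of Theorem \ref{sattocubicsattheorem} and take $M = F_k$ with $k = O(n_4)$ so that $F_k > 2^{n_4}$, which makes the hypothesis $F_k \equiv 0 \bmod M$ trivial and yields $\scount{\phi_4} = c \cdot \scount{\phi_5} \bmod F_k$ for a polynomial-time computable scalar $c$ coming from Lemma \ref{sattocubicsatlemma}. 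A single oracle call on $\phi_5$ then suffices, and unwinding the chain $\scount{\phi_5} \mapsto \scount{\phi_4} \mapsto \scount{\phi_3} \mapsto \scount{\phi_2} \mapsto \scount{\phi_1} = \scount{\phi}$ by successive modular reductions recovers $\scount{\phi}$. The only remaining check, that all intermediate formulas stay polynomial-sized, follows directly from the explicit size bounds recorded in each cited lemma.
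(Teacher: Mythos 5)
Your proposal is correct and follows essentially the same route as the paper: the same chain $\sSAT \to \sSATvar{PL} \to \stwoSATvar{PL} \to \stwoSATvar{MON-PL} \to \stwoSATvar{MON-BI-PL} \to \stwoSATvar{MON-BI-PL-3DEG}$, with exactly the paper's parameter choices ($r=0$, $r=1$, $r=0$, $M=2$, $k=3$) for the $\pP$ case. Your explicit per-stage choice of large-enough moduli and the unwinding of the chain via one oracle call is just a more carefully spelled-out version of how the paper's cited theorems already handle the $2^r$ versus $2^r+1$ versus $F_k$ mismatch, so no new idea is involved.
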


    \begin{proof}
        We can reduce from arbitrary $\sSAT$ instances to $\stwoSATvar{MON-BI-PL-3DEG}$ by applying the previously given reductions in the following order:
        \begin{align*}
            \sSAT \stackrel{\ref{sattoplsat}}{\to} \sSATvar{PL} &\stackrel{\ref{satto2sat}}{\to} \stwoSATvar{PL} \stackrel{\ref{sattomonosat}}{\to} \stwoSATvar{MON-PL} \\
            &\stackrel{\ref{2sattobi2satlemma}}{\to} \stwoSATvar{MON-BI-PL} \stackrel{\ref{sattocubicsatlemma}}{\to} \stwoSATvar{MON-BI-PL-3DEG}
        \end{align*}
        In both cases, we first reduce to $\sSATvar{PL}$ using Lemma \ref{sattoplsat}. Then, we continue differently:
        \begin{enumerate}
            \item Apply Theorem \ref{satto2sat} and Theorem \ref{sattomonosattheorem} to reduce to $\stwoSATvar{MON-PL}$, as these preserve planarity. Then, given $\phi$ with $n$ variables, we apply Lemma \ref{2sattobi2satlemma} with $r = n + 1$ to obtain $\phi' \in \stwoSATvar{MON-BI-PL}$ such that $\scount{\phi} = \scountk{2^r}{\phi'} = \scount{\phi'} \mod 2^r$. Then reduce $\phi'$ to \stwoSATvar{MON-BI-PL-3DEG} using Theorem \ref{sattocubicsattheorem}, since it preserves planarity, monotonicity and bipartite structure.
            \item Similarly, given $\phi \in \sSAT$, apply Lemma \ref{satto2sat2kp1} with $r = 0$, Lemma \ref{sattomonosat} with $r = 1$, Lemma \ref{2sattobi2satlemma} with $r = 0$, and Lemma \ref{sattocubicsatlemma} with $M = 2$ and $k = 3$, to obtain $\phi' \in \stwoSATvar{MON-BI-PL-3DEG}$ such that $\scountk{2}{\phi} = \scountk{2}{\phi'}$. \qedhere
        \end{enumerate}
    \end{proof}

    \section{\textbf{\#PERFECT-MATCHINGS} and the ZW-Calculus}
    \label{sec:perfmatch}
    
    While so far we have worked exclusively with the ZH-calculus, which naturally represents \sSAT, we can use other calculi to attack other problems. In this section, we will use the ZW-calculus to examine the connection between the problems \textbf{\#XSAT} and \textbf{\#PERFECT-MATCHINGS}, and sketch an argument that they are both \sP-complete. Like the connection between \sSAT and \stwoSAT given using the ZH-calculus, with this technique we can circumvent the usual reduction via the permanent. 

    \begin{definition}
        Let $f : \mathbb{B}^n \to \mathbb{B}$ be a boolean function defined by
        \begin{equation}
            f(x) = \bigwedge_{i = 0}^m \phi(c_{i1}, \dots, c_{ik_i})
        \end{equation}
        where $c_{ij} = x_k$ or $\lnot x_k$ for some $k$, and $\phi(\vec{x}) = 1$ if and only if $w(\vec{x}) = 1$, where $w(\vec{x})$ is the Hamming weight of $\vec{x}$. Each $\phi$ term defines a clause, and so $f(x) = 1$ iff every clause has exactly one true literal. The problem \textbf{\#XSAT} is to compute $\scount{f}$. When $k_i = 3$ for all $i$, this is also known as \textbf{\#1-in-3SAT}.
    \end{definition}

    \begin{definition}
        The problem \textbf{\#PERFECT-MATCHINGS} is as follows: given an undirected simple graph $G$, compute the number of perfect matchings of $G$. That is, the number of independent edge sets of $G$ that cover each vertex exactly once. We denote this quantity $\mathrm{PerfMatch}(G)$.
    \end{definition}
    
    The ZW-calculus is a graphical calculus built from two generators, W-spiders and Z-spiders, which are flexsymmetric \cite{hadzihasanovic_diagrammatic_2015}. The Z-spider is a close analogue of the Z-spider in ZH-calculus (except with a $\pi$ phase), whereas the W-spider represents the W-state:
    \begin{equation}
        \tikzfig{reductions/zw-calculus-gens}
    \end{equation}
    These, along with wires, caps, and cups, are combined with tensor product and tensor contraction in the same way as in the ZH-calculus. Like the ZH-calculus, we will treat diagrams purely as tensor networks rather than formal objects - hence equality of diagrams is just equality of tensors. This calculus is also equipped with a set of sound and complete rewrite rules, including the following spider fusion rules,
    \begin{equation}
        \tikzfig{reductions/perfmatch-zwfusion}
    \end{equation}
    as well as others which we omit for brevity as we don't use them explicitly here. In the same way that ZH-calculus diagrams naturally represent \sSAT instances with Z-spiders and clauses, the ZW-calculus naturally represents \textbf{\#XSAT} instances with the following mapping:
    \begin{equation}
        \tikzfig{reductions/zw-calculus-xsat}
    \end{equation}

    \begin{lemma}
        \label{xsat-sp-complete}
        \textbf{\#XSAT} is \sP-complete
    \end{lemma}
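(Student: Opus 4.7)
The plan is to prove the lemma in two parts: membership in $\sP$ and $\sP$-hardness.

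Membership in $\sP$ is immediate: given a Boolean assignment, we can verify in polynomial time that each clause has exactly one true literal, so $\textbf{\#XSAT}$ counts accepting paths of a polynomial-time NTM that guesses and verifies an assignment.

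For $\sP$-hardness, I would give a graphical reduction from $\stwoSAT$, which is $\sP$-complete by Theorem~\ref{satto2sat}. A $\stwoSAT$ instance is a ZH-diagram whose clauses are $2$-ary zero-labeled H-boxes attached via X-spider negations to the variable Z-spiders. The key step is a local rewrite replacing each H-box by a $3$-ary W-spider: for a clause $(\ell_1 \vee \ell_2)$, introduce a fresh auxiliary variable as a new Z-spider and attach the two legs corresponding to $\neg \ell_1, \neg \ell_2$ along with this auxiliary to a single W-spider. A direct case analysis confirms parsimony --- each of the three satisfying assignments of $(\ell_1 \vee \ell_2)$ in $(\ell_1,\ell_2)$ forces a unique value of the auxiliary variable under the ``exactly one of $\neg\ell_1, \neg\ell_2, z$'' constraint, so the counts match clause-by-clause and therefore globally. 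Applying this gadget to every clause produces a ZW-diagram that, read under the $\textbf{\#XSAT}$ encoding given in the excerpt, represents an instance $\phi'$ with $\scount{\phi} = \scount{\phi'}$, yielding the required polynomial-time counting reduction.

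The main obstacle lies in the bookkeeping when passing between the two calculi. The ZW Z-spider differs from the ZH Z-spider by a $\pi$ phase as noted in the excerpt, and the ZH convention of recording negations by X-spiders must be translated into whatever negation primitive the $\textbf{\#XSAT}$ mapping uses in ZW. I would handle these points by applying the displayed ZW spider-fusion rules to absorb stray Hadamards, phases, and negations into the Z-spiders labeling the variables, so that the final diagram is literally of the form produced by the $\textbf{\#XSAT}$ encoding. Verifying the local H-box-to-W-spider rewrite itself can be done either by evaluating both sides on the computational basis or, more cleanly, by a short derivation using the generator-level rules of the ZW-calculus together with its completeness for complex linear maps.
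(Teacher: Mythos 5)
Your proposal is correct and matches the paper's approach for the hardness direction: the paper likewise reduces from \stwoSAT (using its Theorem~\ref{satto2sat}) via exactly this per-clause gadget, turning each clause $(\ell_1 \lor \ell_2)$ into an exactly-one constraint on $\lnot\ell_1,\lnot\ell_2$ and a fresh auxiliary variable, which is parsimonious just as you argue. The only (harmless) difference is the membership half, where you give the direct guess-and-verify NTM argument while the paper instead translates \textbf{\#XSAT} back into \sSAT by splitting large clauses and CNF-encoding the small exactly-one constraints.
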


    \begin{proof}
        We can translate from \stwoSAT to \textbf{\#XSAT} with the following correspondence:
        \begin{equation}
            \tikzfig{reductions/xsat-sat-correspond}
        \end{equation}
        In the other direction, we can translate \textbf{\#XSAT} to \sSAT by first expanding every clause of size more than three as follows:
        \begin{equation}
            \tikzfig{reductions/xsat-1in3sat-split}
        \end{equation}
        Then, each clause
        \begin{equation}
            \begin{aligned}
                \phi(x, y, z) &= (x \land \lnot y \land \lnot z) \lor (\lnot x \land y \land \lnot z) \lor (\lnot x \land \lnot y \land z) \\
                \phi(x, y) &= (x \land \lnot y) \lor (\lnot x \land y) \\
                \phi(x) &= x
            \end{aligned}
        \end{equation}
        can be rewritten as a bounded number of CNF clauses.
    \end{proof}

    However, as Carette \textit{et al.} \cite{carette_compositionality_2023} note, diagrams of the ZW-calculus can also naturally represent instances of \textbf{\#PERFECT-MATCHINGS} by taking each vertex of the graph to be a W-spider and edges of the graph as wires. Therefore, any ZW-diagram containing only W-spiders represents an instance of \textbf{\#PERFECT-MATCHINGS}.

    \begin{theorem}
        \label{perfmatch-sp-complete}
        \textbf{\#PERFECT-MATCHINGS} is \sP-complete.
    \end{theorem}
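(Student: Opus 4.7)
The plan is to prove $\sP$-completeness of $\textbf{\#PERFECT-MATCHINGS}$ by establishing membership and hardness separately. Membership in $\sP$ is immediate, since a nondeterministic polynomial-time Turing machine can guess an edge subset and verify in polynomial time whether it is a perfect matching. For hardness I would reduce from $\textbf{\#XSAT}$, which is $\sP$-complete by Lemma \ref{xsat-sp-complete}, exploiting the fact that both problems live inside the ZW-calculus: an $\textbf{\#XSAT}$ instance $f$ is represented as a ZW-diagram $D_f$ whose scalar value equals $\scount{f}$, while every ZW-diagram built purely from W-spiders and wires represents a $\textbf{\#PERFECT-MATCHINGS}$ instance. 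The reduction becomes a purely graphical task: rewrite $D_f$ into an equivalent diagram containing only W-spiders and wires, up to a polynomial-time computable scalar $c$.

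To carry out the rewriting I would first absorb all negations onto the legs of the adjacent Z-spiders using the flexsymmetry of the W-spider and the spider fusion rules given in the preceding discussion. The remaining obstructions are the $\pi$-phase Z-spiders that represent variables. Using Z-spider fusion I would decompose each such spider of arbitrary degree $n$ into a tree of degree-three Z-spiders, and then introduce a local gadget consisting only of W-spiders and wires that reproduces the degree-three Z-spider tensor up to a fixed scalar. Since a degree-three Z-spider has just two nonzero tensor entries, finding such a gadget is a finite tensor identity that can be verified by direct calculation. Applying this gadget uniformly to every degree-three Z-spider in the decomposition leaves a diagram composed solely of W-spiders and wires.

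Reading the resulting diagram off as a graph $G$ yields $\scount{f} = c \cdot \mathrm{PerfMatch}(G)$, where $c$ accumulates the scalar factors collected during the rewriting. Because $c$ is determined by the rewriting history and the number of W-spiders introduced is linear in the size of $D_f$, this constitutes a polynomial-time counting reduction $\textbf{\#XSAT} \to \textbf{\#PERFECT-MATCHINGS}$. Together with membership in $\sP$, this gives $\sP$-completeness.

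The principal obstacle is the construction of the W-only gadget for the degree-three $\pi$-phase Z-spider: the Z-spider enforces the ``all legs equal'' constraint, whereas every W-spider enforces an ``exactly one leg is $1$'' constraint, so the gadget must emulate copying through a carefully chosen pattern of W-spiders, wires, cups and caps. Once such a gadget is exhibited, either by appealing to existing identities from the ZW-calculus literature \cite{hadzihasanovic_diagrammatic_2015} or by a small-case search, the rest of the argument is routine bookkeeping of scalars and a check that the output diagram has polynomial size.
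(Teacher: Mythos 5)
There is a genuine gap at the heart of your hardness reduction. You assert that there exists a gadget, built purely from W-spiders, wires, cups and caps, that reproduces the degree-three $\pi$-phase Z-spider tensor up to a fixed scalar. No such gadget can exist: every tensor network built from W-spiders and plain wires has only non-negative entries (composition sums products of non-negative numbers), whereas the $\pi$-phase Z-spider has entries $+1$ at $(0,0,0)$ and $-1$ at $(1,1,1)$. No real scalar $c$ can rescale a non-negative tensor onto one with entries of both signs, so the ``small-case search'' you appeal to must come up empty. This sign obstruction is precisely the difficulty the paper's proof is organized around: after reducing Z-spider degrees, the leftover $\pi$-phase content is reinterpreted \emph{modulo} $2^{n+c}+1$, so that $-1 \equiv 2^{n+c}$, which is then realized as a chain of $n+c$ small W-gadgets each implementing $\mathrm{diag}(1,2)$. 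Your plan has no analogue of this modular step and therefore cannot be completed as stated.

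A second, related omission: even after the modular trick there remain degree-one Z-spiders (essentially $\ket{0}-\ket{1}$ states attached to the graph), and these cannot be absorbed into a W-only diagram either. The paper combines them into a single such spider and then splits the diagram as a \emph{sum} of two W-only diagrams, so the reduction outputs two graphs $G_1, G_2$ with $\scount{f} = 2^{-c}(\mathrm{PerfMatch}(G_1) + \mathrm{PerfMatch}(G_2)) \bmod (2^{n+c}+1)$, rather than a single $G$ with $\scount{f} = c\cdot\mathrm{PerfMatch}(G)$ as you claim. This is still a valid polynomial-time Turing (Cook) reduction, which is all that \sP-hardness requires, but your write-up should not promise a single-call many-one form. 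Your membership argument via a nondeterministic machine is fine (and arguably cleaner than the paper's, which instead rewrites edges into degree-two variables to land back in \textbf{\#XSAT}), but the hardness direction needs the modular reinterpretation and the two-diagram split to go through.
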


    \begin{proof}
        Suppose we are given an instance $f$ of \textbf{\#XSAT} on $n$ variables as a ZW-diagram, then to transform it to an instance of \textbf{\#PERFECT-MATCHINGS} we need to remove all Z-spiders, which represent variables. First split all the variables so they have degree at most three:
        \begin{equation}
            \tikzfig{reductions/perfmatch-zsplit}
        \end{equation}
        Then we can use the following rewrites to remove all variables with degree two and three:
        \begin{equation}
            \label{perfmatch-zthree}
            \tikzfig{reductions/perfmatch-zthree}
        \end{equation}
        We are left with only variables of degree one, some extraneous Z-spiders with degree two, and a constant factor of $2^{-c}$. To complete the reduction to \textbf{\#PERFECT-MATCHINGS} it then remains to show we can get rid of these degree-one variables and the degree-two Z-spiders. 

        We can remove the Z-spiders by considering the whole diagram modulo $2^{n + c} + 1$: since we started with a \textbf{\#XSAT} instance with $n$ variables, we have $0 \leq \scount{f} \leq 2^n$, so the value of the remaining diagram is at most $2^{n + c}$. It is hence sufficient to calculate modulo $2^{n + c} + 1$ for our resulting diagram. In this setting, we have
        \begin{equation}
            \tikzfig{reductions/perfmatch-zident} = \begin{pmatrix}1 & 0 \\ 0 & -1\end{pmatrix} \equiv  \begin{pmatrix}1 & 0 \\ 0 & 2^{n + c}\end{pmatrix} = \begin{pmatrix}1 & 0 \\ 0 & 2\end{pmatrix}^{n + c} = \tikzfig{reductions/perfmatch-zident2} \pmod{2^{n + c} + 1}
        \end{equation}
        since
        \begin{equation}
            \tikzfig{reductions/perfmatch-square} = \begin{pmatrix}1 & 0 \\ 0 & 2\end{pmatrix}
        \end{equation}
        and hence we are left with a diagram containing only variables of degree one, and no other Z-spiders. To remove these variables of degree one, note that
        \begin{equation}
            \tikzfig{reductions/perfmatch-copyz}
        \end{equation}
        and thus we can combine all the variables of degree one together:
        \begin{equation}
            \tikzfig{reductions/perfmatch-uncopy}
        \end{equation}
        Finally, we can remove the last variable by splitting the diagram as a sum of diagrams, neither of which contain any Z-spiders:
        \begin{equation}
            \tikzfig{reductions/perfmatch-zsum}
        \end{equation}
        Thus, these two diagrams each represent an instance of \textbf{\#PERFECT-MATCHINGS} - let us denote the graphs of the corresponding instances as $G_1$ and $G_2$. The construction above allows us to obtain $G_1$ and $G_2$ in polynomial time, and we have
        \begin{equation}
            \scount{f} = 2^{-c}(\mathrm{PerfMatch}(G_1) + \mathrm{PerfMatch}(G_2)) \mod{2^{n + c} + 1}
        \end{equation}
        hence \textbf{\#PERFECT-MATCHINGS} is \sP-hard. We can also see that $\textbf{\#PERFECT-MATCHINGS} \in \sP$, since we can use Equation \eqref{perfmatch-zthree} to rewrite all wires into variables of degree two, and thus transform an instance of \textbf{\#PERFECT-MATCHINGS} into an instance of \textbf{\#XSAT}. Therefore, \textbf{\#PERFECT-MATCHINGS} is \sP-complete.
    \end{proof}

    \section{\sP-Completeness for the Permanent}
    \label{sec:intperm}

    The proof by Valiant \cite{valiant_complexity_1979-2} that the permanent of an integer-valued matrix, $\mathbb{Z}\textbf{-Permanent}$, is \sP-complete, and the simplified proof by Ben-Dor and Halevi \cite{ben-dor_zero-one_1993}, both rely on a reduction from \sthreeSAT. This reduction could be simplified by using \stwoSAT instead, but this was unfortunately not possible, as the proof that \stwoSAT is \sP-complete relies itself on a reduction from the permanent. However, since we proved in Theorem \ref{satto2sat} that \stwoSAT is \sP-complete independent of the permanent, we can make use of this to simplify the reduction for $\mathbb{Z}\textbf{-Permanent}$ further. In this section, we detail this reduction, which shows that $\mathbb{Z}\textbf{-Permanent}$ is \sP-hard. Our construction and proof is essentially identical to that of Ben-Dor and Halevi \cite{ben-dor_zero-one_1993}, with the exception that we start with an instance of \stwoSATvar{MON}, and can hence use simpler gadgets. 
    
    \begin{definition}
        Given a directed edge-weighted graph $G$ with edges $E$, a \emph{cycle-cover} of $G$ is a set $E' \subseteq E$ of simple cycles that partition the vertices of $G$. Note that self-loops are permitted in $G$. The weight of a cycle cover is the product of all the weights of the edges in $E'$.
    \end{definition}

    \begin{lemma}
        Let $G$ be a directed graph with self-loops and edge weights $w_{ij}$, and let $A$ be its adjacency matrix, i.e.~$A_{ij} = w_{ij}$ if $i$ and $j$ are connected or $A_{ij} = 0$ otherwise. Then the permanent of $A$ is the sum of weights of all cycle-covers of $G$. We denote this number by $\scount{G}$.
    \end{lemma}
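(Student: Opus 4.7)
The plan is to unpack the definition of the permanent and use the standard bijection between permutations and spanning $1$-regular subgraphs. Writing $n$ for the number of vertices, I would start from
\begin{equation}
\mathrm{perm}(A) \;=\; \sum_{\sigma \in S_n} \prod_{i=1}^{n} A_{i,\sigma(i)}.
\end{equation}
For each permutation $\sigma$, associate the edge set $E_\sigma := \{(i, \sigma(i)) : i \in \{1, \dots, n\}\}$. Since $A_{i,\sigma(i)} = w_{i,\sigma(i)}$ when $(i, \sigma(i))$ is an edge of $G$ and $0$ otherwise, the summand for $\sigma$ vanishes unless every edge in $E_\sigma$ is present in $G$, in which case it equals $\prod_i w_{i,\sigma(i)}$.

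Next I would verify that the $\sigma$ with nonvanishing contribution are exactly the cycle covers of $G$. The point is the classical fact that any permutation decomposes uniquely into disjoint cycles on $\{1, \dots, n\}$; since $E_\sigma$ supplies exactly one out-edge and one in-edge at every vertex, the edges of $E_\sigma$ trace out precisely these cycles, with fixed points of $\sigma$ corresponding to self-loops (this is where the hypothesis that $G$ may have self-loops is essential). Conversely, any cycle cover $E'$ of $G$ determines a permutation $\sigma$ by letting $\sigma(i)$ be the unique successor of $i$ along its cycle in $E'$, and these two maps are mutually inverse.

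Under this bijection, the weight $\prod_{(i,j) \in E'} w_{ij}$ of a cycle cover is identical term-by-term to $\prod_i A_{i,\sigma(i)}$. Summing over all $\sigma \in S_n$ therefore yields the sum of weights of all cycle covers of $G$, proving $\mathrm{perm}(A) = \scount{G}$. There is no real obstacle here: the only subtlety is being careful about self-loops and about the fact that a $1$-regular directed spanning subgraph is exactly a disjoint union of directed cycles, which is immediate from the pigeonhole principle applied at each vertex.
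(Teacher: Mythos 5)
Your argument is correct: expanding the permanent over permutations, discarding the vanishing terms, and using the cycle decomposition of permutations (with fixed points handled by self-loops) to set up the weight-preserving bijection with cycle covers is exactly the standard proof of this classical fact, which the paper itself states without proof as a known lemma. Nothing is missing.
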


    Given this, it is sufficient to reduce \stwoSATvar{MON} to the problem of determining the sum of weights of cycle-covers of a graph. We aim to construct a graph $G_\phi$ from a \stwoSATvar{MON} instance $\phi$ with $n$ variables and $m$ clauses, in polynomial time, such that $\scount{G_\phi} = f(\phi) \cdot \scount{\phi}$ for some easily computable and suitably bounded $f(\phi)$. We construct $G_\phi$ as follows:
    \begin{enumerate}
        \item For each variable $x_i$ in $\phi$, introduce a vertex $v_i$ to $G_\phi$.
        \item For each clause in $\phi$, introduce a \emph{clause gadget} of four vertices to $G_\phi$, the structure of which we will describe momentarily. Two of the vertices of this gadget are designated as the first and second input respectively.
        \item For every variable vertex, add a self-loop $s_i$ of weight one to $G_\phi$.
        \item For every variable vertex $v_i$, add an edge of weight one from $v_i$ to an unused input $c_1$ of the first clause in which $v$ appears. Then add an edge of weight one from $c_1$ to $c_2$, an unused input of the second clause in which $v$ appears. Continue similarly until all $d_i$ clauses in which $v$ appears have been processed, then add an edge of weight one from $c_{d_i}$ to $v_i$. Let these edges be labeled as $c_{ij}$.
    \end{enumerate}
    An example of this construction is given below, the loop of edges proceeding from each variable highlighted in a different color:
    \begin{equation}
        \tikzfig{reductions/permanent-2sat-example}
    \end{equation}
    The clause gadget is given by the following graph
    \begin{equation*}
        \tikzfig{reductions/2sat-perm-gadget}
    \end{equation*}
    where the top-most vertex is the first input, and the bottom-most vertex is the second. Bidirectional edges represent a pair of edges, one in each direction, with the same weights. Now let $E_c$ be the set of edges in $G_\phi$ that are internal to clause gadgets, and let $E_r = E \setminus E_c$ be the rest.

    \begin{definition}
        Let a \emph{partial cover} of $G_\phi$ be a subset $E_r' \subseteq E_r$. A \emph{completion} of $E_r'$ is a cycle cover of $G_\phi$ given by $E_r' \cup E_c'$  where $E_c' \subseteq E_c$. We call the \emph{weight} of $E_r'$ the sum of the weights of all completions of $E_r'$.  
    \end{definition}

    Let us say that a partial cover $E_r'$ is induced by a satisfying assignment $\vec{x}$ of $\phi$ if, for every variable $x_i$ assigned false in $\vec{x}$, $s_i \in E_r'$ and $c_{ij} \notin E_r'$ for all $1 \leq j \leq d_i$, and for every variable $x_i$ assigned true in $\vec{x}$, $c_{ij} \in E_r'$ for all $1 \leq j \leq d_i$ and $s_i \notin E_r'$.  We wish to argue that the weight of a partial cover of $G_\phi$ is non-zero if and only if it is induced by a satisfying assignment.

    \begin{lemma}
        \label{2satpermlemma}
        Let $E_r'$ be a partial cover of $G_\phi$, then the weight of $E_r'$ is $4^m$ if $E_r'$ is induced by a satisfying assignment, and zero otherwise. Moreover, each such $E_r'$ is induced by a unique satisfying assignment.
    \end{lemma}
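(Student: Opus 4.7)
The plan is to analyze the weight of an arbitrary partial cover $E_r'$ by factoring it over clause gadgets and using the design of the four-vertex gadget to kill off contributions that do not correspond to a satisfying assignment.

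First I would examine the degree constraints at each variable vertex $v_i$. Its only $E_r$-neighbours are the self-loop $s_i$ and the two cycle edges $v_i \to c_{i,1}$ and $c_{i,d_i} \to v_i$; no edges of $E_c$ touch $v_i$. Hence, for any completion to exist, either $s_i \in E_r'$ with both of these cycle edges excluded, or $s_i \notin E_r'$ with both of them included. I would set $x_i = 0$ in the first case and $x_i = 1$ in the second; this gives the unique assignment $\vec{x}$ inducing $E_r'$ whenever $E_r'$ is consistent, which also establishes the uniqueness part of the claim.

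Next, the total weight of $E_r'$ factors over gadgets: $E_c$ decomposes as a disjoint union of edge sets $E_c(\Gamma)$ per clause gadget, and each gadget vertex is incident only to edges of $E_c(\Gamma)$ and to the external $E_r$-edges at the two inputs of $\Gamma$. Thus every completion $E_c'$ splits as $\bigsqcup_\Gamma E_c'(\Gamma)$, and the weight of $E_r'$ equals $\prod_\Gamma W_\Gamma(m_{1,\Gamma}, m_{2,\Gamma})$, where $m_{i,\Gamma} \in \{0,1\}^2$ records which of the in- and out-edges at the $i$-th input of $\Gamma$ belong to $E_r'$. By direct inspection of the four-vertex clause gadget, I would verify two properties: (i) $W_\Gamma(m_1, m_2) = 0$ whenever some $m_i$ is \emph{mixed}, meaning exactly one of its two external edges is present, and (ii) when both inputs are in \emph{active} $(1,1)$ or \emph{passive} $(0,0)$ modes, $W_\Gamma$ equals $4$ if at least one input is active and equals $0$ if both are passive.

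Property (i) rules out every $E_r'$ except those in which, at each clause-input vertex along a variable's cycle, either both incident cycle edges are included or both are excluded. Combined with the variable-vertex constraint from the first step, this forces each variable's cycle to be either fully included or fully excluded in $E_r'$, so any $E_r'$ with nonzero weight is induced by some Boolean assignment $\vec{x}$. Property (ii) then says that the contribution of each gadget is exactly $4$ when the corresponding monotone clause $x_i \lor x_j$ is satisfied by $\vec{x}$ and $0$ otherwise. Taking the product over the $m$ clauses yields weight $4^m$ when $\vec{x}$ satisfies $\phi$ and $0$ when it does not.

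The main obstacle is the finite but delicate gadget case analysis establishing (i) and (ii). For each of the sixteen input-mode pairs one must enumerate the subsets of gadget-internal edges that, together with the external edges available at the inputs, cover the four gadget vertices with in- and out-degree exactly one, and check that the weighted totals hit the claimed values. The authors note that the gadget is simple enough to be verified by hand; once its edge weights and directions are fixed, this reduces to a routine finite computation, with everything else in the proof being structural bookkeeping.
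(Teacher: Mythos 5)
Your proposal is correct and follows essentially the same route as the paper's proof: factor the weight of a completion over the clause gadgets, verify by a finite check of the gadget that mixed/pass-through and all-passive input configurations contribute zero while configurations with at least one active input contribute four, and recover the (unique) assignment from which cycle edges and self-loops lie in $E_r'$. Your treatment of the variable-vertex degree constraints and the explicit product factorization is just a more systematic write-up of the same argument; the only part deferred, the per-gadget weight computation, is exactly what the paper also settles by direct inspection of the gadget's completions.
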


    \begin{proof}
        Suppose $E_r'$ is induced by a satisfying assignment. Then for each clause gadget, the ingoing and outgoing edges are included in $E_r'$ for either one or both inputs (otherwise there is an unsatisfied clause). The possible completions of $E_r'$ are as follows for each clause gadget:
        \begin{equation}
            \tikzfig{reductions/2sat-perm-gadget-cycles-1}
        \end{equation}
        The dotted edges represent edges not included in the cycle-cover. Then the total weight of each clause gadget over the completions is four in either case, so the overall weight of $E_r'$ is $4^m$. Now suppose that $E_r'$ is not induced by a satisfying assignment. Note that if the number of incoming and outgoing edges of each clause gadget in $E_r'$ is not equal, then the weight of $E_r'$ is zero, as there is no valid completion of $E_r'$ (because there can be no such cycle-cover). Therefore, the only remaining case is that there is at least one clause gadget which has no incoming and outgoing edges, or has one incoming edge and one outgoing edge on the opposing input (otherwise $E_r'$ would be induced by a satisfying assignment). In either case, we can see the total weight of the gadget over the completions is zero:
        \begin{equation}
            \tikzfig{reductions/2sat-perm-gadget-cycles-2}
        \end{equation} 
        Hence, the weight of $E_r'$ must also be zero. Note that each $E_r'$ that is induced by a satisfying assignment must be induced by a unique assignment, since you can recover the assignment from $E_r'$.
    \end{proof}

    Clearly, the sum of weights of all partial covers of $G_\phi$ is the same as the sum of weights of all cycle-covers of $G_\phi$. But by Lemma \ref{2satpermlemma},  this is $4^m \scount{\phi}$, so $\scount{\phi} = 4^{-m} \scount{G_\phi}$, and thus $\mathbb{Z}\textbf{-Permanent}$ is \sP-hard, as $G_\phi$ can be computed in polynomial time from $\phi$. 

    It is interesting to note that the constructions of Ben-Dor and Halevi, and Valiant, both make use of negative-weight edges and have $f(\phi) = k^m$ for some even integer $k$ ($12$ for Ben-Dor and Halevi, and $4^5$ for Valiant). In order to further simplify the next steps of the reduction to $\mathbb{B}\textbf{-Permanent}$, it would be desirable to have no negative weights, or $k = 1$. However, as Valiant points out \cite{valiant_complexity_1979-2}, neither of these is likely to be possible:
    \begin{itemize}
        \item If $k$ is odd, then $\scount{\phi} \equiv \scount{G_\phi} \mod 2$, but $\scount{G_\phi} \mod 2$ is easy to compute \cite{ben-dor_zero-one_1993} (as the parity of the permanent is equal to the parity of the determinant), so then $\textbf{P} = \pP$ and $\NP = \textbf{RP}$ by the Valiant-Vazirani theorem \cite{valiant_np_1986}.
        \item Suppose $G_\phi$ is constructed by reduction from \textbf{3SAT}. If there are no negative-weighted edges, then the existence of any cycle-cover of $G_\phi$ indicates the existence of a satisfying assignment to $\phi$. But determining if a cycle-cover exists is easy for general directed graphs, so then $\textbf{P} = \textbf{NP}$.
    \end{itemize}
    This last argument does not hold up for our construction, since we start from $\stwoSATvar{MON}$, for which it is trivial to determine if a satisfying assignment exists (indeed, one always exists by setting every variable true). However, we can still rule out the possibility of a reduction without negative-weighted edges: it is known that $\mathbb{Z}\textbf{-Permanent}$ with non-negative weights has an FPRAS \cite{jerrum_polynomial-time_2004}, whereas \stwoSATvar{MON} does not, unless $\NP = \textbf{RP}$ \cite[Theorem 57]{welsh_complexity_2011}.

\end{document}